\begin{document}

\newcommand{\alginit}{}
\newcommand{\alg}{\textsc{ALG}}

\newcommand{\nsi}[1]{{\color{red}{[NSI: #1]}}}
\newcommand{\blcomment}[1]{{\color{blue}{[BL: #1]}}}
\newcommand{\bledit}[1]{{\color{blue}{#1}}}

 \newtheorem{theorem}{Theorem}[section]
 \newtheorem{corollary}[theorem]{Corollary}
 \newtheorem{lemma}[theorem]{Lemma}
 \newtheorem{claim}[theorem]{Claim}
\newtheorem{fact}[theorem]{Fact}
 \newtheorem{proposition}[theorem]{Proposition}
 \newtheorem{conjecture}{Conjecture}
 \newtheorem{property}{Property}
\newtheorem{observation}[theorem]{Observation}
 \newtheorem{remark}{Remark}
 \newtheorem{definition}{Definition}
 \newtheorem{example}{Example}

\title{Dynamic Weighted Matching with Heterogeneous Arrival and Departure Rates}


\author{Natalie Collina\thanks{Harvard University, \url{nataliecollina@gmail.com}}%
\and Nicole Immorlica\thanks{Microsoft Research, \url{nicimm@microsoft.com}} \and Kevin Leyton-Brown\thanks{University of British Columbia, \url{kevinlb@cs.ubc.ca}} \and Brendan Lucier\thanks{Microsoft Research, \url{brlucier@microsoft.com}} \and Neil Newman\thanks{University of British Columbia, \url{newmanne@cs.ubc.ca}}}

\date{}

\maketitle

\begin{abstract}
We study a dynamic non-bipartite matching problem.  There is a fixed set of agent types, and agents of a given type arrive and depart according to type-specific Poisson processes.  Agent departures are not announced in advance.  The value of a match is determined by the types of the matched agents.  We present an online algorithm that is (1/8)-competitive with respect to the value of the optimal-in-hindsight policy, for arbitrary weighted graphs.  Our algorithm treats agents heterogeneously, interpolating between immediate and delayed matching in order to thicken the market while still matching valuable agents opportunistically.
\end{abstract}

\section{Introduction}

Matching markets are ubiquitous in online platforms.  Sponsored search auctions like Google Adwords match ads and users, ridesharing systems like Uber and Lyft match drivers and riders, online markets like Amazon and eBay match sellers and buyers.  In each case, the value of a match is a function of the types of participating agents.  In sponsored search auctions, a restaurant ad is more valuable when matched to a geographically co-located user.  In ridesharing systems, a driver and rider have higher utility for being matched to each other if they are nearby. In an online market, buyers might have heterogeneous preferences over service/product quality and price trade-offs which impact match quality.  

The role of the platform is to find high-value matches.  However, this task is significantly complicated by the fact that agents arrive and depart dynamically over time, and may fail to inform the platform of their departure. In this paper, we mitigate this complication by assuming that agents have known Poisson arrival and departure rates that are a function only of their type.  This allows us to characterize the optimal expected value from matches using a linear program. This program bounds the rate at which each pair of types match to one-another in the optimal solution.  Our algorithm uses these  LP-based estimates of the optimal rates as guidelines.
When an agent arrives to the market, we use these guidelines to choose with which other types of agents (if any) a match should be attempted.  If the newly-arriving agent is not successfully matched to any other, then the agent is added to a waiting pool of future match opportunities.
We prove the resulting algorithm is a constant approximation to the optimal-in-hindsight policy, with competitive ratio at most $8$.  While we motivate our problem in the context of bipartite matchings, we note our solution holds for general non-bipartite graphs.

There is a significant body of prior literature on dynamic stochastic matching in settings where agent departures are immediate or deterministic (and hence predictable)~\cite{FeldmanMMM09,BernhardMZ11,MahdianY2011,huang2018match,huang2019tight}, or where the platform is informed immediately before an agent departs~\cite{Akbarpour2017,AshlagiBDJSS19,dickerson2018assigning,truong2019prophet}.  In such settings, it is natural for the platform to delay matches until an agent is about to depart, in order to maximize the set of available options.  In contrast, when the platform cannot predict departures, there is a tension between taking a guaranteed (but potentially suboptimal) match now, or pushing one's luck to see if a better match arrives later.  The main technical challenge in developing an online policy is navigating this tradeoff for agents of different types.

Our LP-based approach is certainly not new in the context of stochastic matching, but we find that our result has several interesting qualitative insights, especially for settings where agent departures are random, heterogeneous, and unannounced.
First, our algorithm treats matches heterogeneously.  For some matches, the linear program suggests forming them at a high rate.  Our algorithm treats these matches as a greedy algorithm would, matching them (almost) immediately upon arrival.  For other matches, the linear program suggests forming them at a low rate.  Our algorithm treats these matches more like a periodic clearing algorithm would, allowing the market to thicken before attempting the matches.


This heterogeneous treatment is important for good approximations in our setting.  Consider, for example, an environment with two types of buyers, low and high, and one type of seller.  The low buyers arrive frequently to the market and depart at a constant rate, whereas the sellers arrive less often.  The high buyers arrive much less frequently than the sellers, and depart immediately after they arrive, but matches involving these high buyers account for almost all the value of the optimal policy.  In this case, it is important to greedily match the high buyers and delay matches with the low buyers to thicken the market.  A uniformly greedy policy, that immediately matches all agents, will likely have no sellers in the market when high buyers arrive, as there are always low buyers available to match with them. A periodic clearing algorithm that attempts to thicken the market by delaying all matches for a fixed period of time will likely have no access to high buyers at match time, since high buyers depart immediately after they arrive.  

Another qualitative insight of our result is the importance of being conservative in matching attempts. Our algorithm scales back the match-rate estimates of the linear program by $50\%$.
This might seem incredibly wasteful at first blush.  However, this scaling is provably necessary: 
if one were to remove this scaling from our algorithm
then it would not achieve any bounded approximation to the optimal matching.  Intuitively, the issue is that the matching policy must leave some slack in the system --- by leaving a certain fraction of agents unmatched --- in order to take advantage of unexpected fortuitous events where a very valuable match becomes possible.  Since an optimal LP solution typically would leave no such slack, one can instead guarantee it by being conservative when matching.

As is common in the dynamic stochastic matching literature, our approach is to solve an LP relaxation of the offline optimal matching problem, then use this solution as guidance for our online matching policy.  We prove that the resulting policy obtains a constant approximation to the LP benchmark, which is only stronger than the offline optimal match value (and hence the optimal online policy).  
The main technical hurdle is that the outcome of these matching attempts is determined by the state of which types of agents are present in the market, and this introduces correlations across time.  For instance, whether a certain type of agent is present in the market is (negatively) correlated with the presence of other agents that generate high value from matches with it.  In principle, such correlations could result in scenarios where a certain type is either not present at all or is overabundant, impeding our ability to track the LP relaxation which is smooth across time.  In our analysis we show that the impact of such correlations is bounded, by coupling the availability of agents in the system with independent Poisson processes that dominate (or are dominated by) them.

\subsection{Related Literature}
\label{sec:related}

There is a vast recent literature on algorithms for online matching (sometimes called online task arrival).  In a seminal paper, Karp et al.~\cite{KarpVV90} 
consider an (unweighted) online bipartite matching problem where one side of the graph is static and the vertices of the other side arrive online.  They show that a randomized greedy matching method obtains a $(1 - 1/e)$ approximation and that this is tight.  This was later extended by Mehta et al.~\cite{MehtaSVV05} to a generalized weighted matching environment motivated by ad auctions, with budget constraints on the static side of the market.  Both of these results assume adversarial types.  

Stochastic variants of the online bipartite matching problem have been studied as well. Feldman et al.~\cite{FeldmanMMM09} consider a stochastic variant in which vertex types on the online side of the market are drawn i.i.d.\ from a fixed distribution.  They showed how to beat the adversarial bound of $(1 - 1/e)$ in this stochastic setting, using an LP-based approach that solves for a fractional (expected) matching, then rounds online using a flow decomposition.  This led to 
a sequence of papers that improved the approximation factors for both the weighted and unweighted versions of the stochastic problem
~\cite{BernhardMZ11,MahdianY2011}, including variants with stochastic rewards~\cite{MehtaP12,MehtaWZ2015} and with capacities on the fixed side~\cite{AlaeiMV2012}.  Gravin and Wang~\cite{Gravin2019prophet} obtain a constant approximation for a related variant inspired by prophet inequalities, where edges (rather than nodes) arrive online and must be matched immediately or lost.

Our model is closer in spirit to the literature on dynamic matching, where agents on both sides of the market arrive and depart over time. An algorithm proposes matches online between agents that are simultaneously present. 
Huang et. al~\cite{huang2018match} study an unweighted model in which node arrivals and departures are adversarial, but nodes announce when they are about to depart. They derive constant competitive online algorithms; in a later paper, Huang et al.~\cite{huang2019tight} find tight competitive ratios. 
Akbarpour et al.~\cite{Akbarpour2017} similarly consider an unweighted version in which agents depart at arbitrary times and inform the market when they are about to depart, but arrivals are stochastic.  In this case, it is approximately optimal to match agents as they go critical.  On the other hand, they show that without departure warnings, greedily matching agents as they arrive is nearly optimal.  
As the graph is unweighted in their model and agents are homogenous, analysis can proceed by studying the limiting distribution of the number of agents in the market.  

The case of weighted matching with departure warnings was studied by Ashlagi et al.~\cite{AshlagiBDJSS19}, and they obtain a constant approximation to the optimal weighted matching.  
When agents on both sides arrive according to a known IID random process, Dickerson et al.~\cite{dickerson2018assigning} provide constant competitive algorithms under the assumption that one side (say workers) never depart until they are assigned, and the other side (say tasks) depart immediately after arrival if unassigned. 
Truong and Wang~\cite{truong2019prophet} consider a related weighted bipartite matching model where agents arrive according to a general stochastic process, agents on one side depart after a fixed deterministic amount of time, agents on the other side depart immediately after arrival if unassigned, and they likewise obtain constant competitive algorithms.  Importantly, in all of these works it is assumed that the platform knows when an agent is about to leave the system, either because this can be perfectly predicted or because the platform is explicitly notified, and the platform can therefore wait until an agent ``goes critical'' before attempting a match.  In contrast to these works, we assume the platform is not notified of (and cannot predict) impending departures.

Independently and concurrently with our work, Aouad and Saritac~\cite{Aouad20} studied a similar model of dynamic matching with unannounced departures.  They likewise find that there is a tension between greedy matching and batching.  They develop an online algorithm guided by a quadratic program, and show that it is ($4e/(e-1)$)-competitive for arbitrary compatibility graphs.  In contrast, our method is based on linear programming (rather than quadratic programming), and our competitive ratio bound is weaker ($8$ versus $4e/(e-1)$).  They also study a cost-minimization version of the problem, for which they develop an online algorithm that they analyze theoretically and evaluate on empirical data.  We leave open the question of whether a combination of the ideas in these works could be used to develop algorithms with improved competitive ratio.

Other papers consider the related problem of minimizing average waiting time.  Anderson et al.~\cite{Anderson2017} find that matching agents as they arrive is nearly optimal even with departure warnings.  Ashlagi et al.~\cite{ashlagi2019matching} consider a model with two agent types -- hard-to-match and easy-to-match -- and derive structural insights about policies that miminize average waiting time.  Baccara et al.~\cite{baccara2018optimal} consider a hybrid model with two agent types in which agents have varying match values and also incur waiting costs (but never leave the system).





\section{Preliminaries}
\label{sec:model}

We consider a model with agents that arrive and depart over time.  The type space of agents is $X$. Agents of type $x \in X$ arrive according to a homogeneous Poisson point process of rate $\lambda_{x} > 0$.\footnote{We discuss Poisson processes more formally in Section~\ref{sec:model.poisson}.}  Each agent of type $x$ that arrives then departs at Poisson rate $\mu_{x} > 0$.  We will allow $\mu_x$ to be $\infty$, which indicates that an agent of type $x$ always departs immediately after arriving.  For an agent $i$ of type $x$, we will write $a_i$ and $d_i$ for its realized arrival and departure times, respectively.  Throughout, we refer to types of agents with letters $x$ and $y$, and to specific agents with letters $i$ and $j$.

A {\em matching} is a set $\tau$ of times and a pair of matched agents for each time $t\in\tau$.  A matching is {\em feasible} if, for all matching times $t\in\tau$, the agents matched at $t$ a) have already arrived and not yet departed, and b) have not been matched to anyone else at or before time $t$.  The value of matching an agent of type $x\in X$ to an agent of type $y\in X$ is $v_{xy}$.  For convenience, we sometimes denote the total value of all matches made at time $t$ by $v_t$.

A {\em matching policy} chooses, at each time $t$, based only on the history up until time $t$, whether to match a pair of agents or to make no match.  A {\em policy with hindsight} can revise past decisions, whereas for an {\em online policy}, all decisions are irrevocable.  For any policy and time $T$, let $\tau(T)$ be all times $t \leq T$ at which it made a match,\footnote{Note for an online policy, $\tau(T)\subseteq\tau(T')$ whenever $T\leq T'$; however this need not hold for a policy with hindsight.} and $v_t$ be the value of the matches made at time $t$, if any.  Then the value of the policy is:
$$\liminf_{T\rightarrow\infty}\frac{1}{T}\cdot E\left[\sum_{t\in\tau(T)}v_t\right]$$
where the expectation is over the randomness in the arrival/departure process as well as any randomness in the policy.  That is, the policy's value is the long-run average value of matches made per unit of time.

\subsection{Poisson Processes}
\label{sec:model.poisson}


We now describe Poisson processes more formally.  A point process is a random countable set of points $Z = \{z_1, z_2, \dotsc\}$.  We restrict attention to the case $Z \subset R_{\geq 0}$, where we can interpret $Z$ as a collection of event times.  We refer to a point process by its set of points $Z$, which we think of as a random variable.  Given $t \geq 0$, we write $Z(t)$ for the event that $t \in Z$.

For any $T \geq 0$, we'll write $n_Z(T)$ for the number of points in $Z \cap [0,T]$; we think of this as the (random) number of events that occur before time $T$.  Given two point processes $Z$ and $Y$, we'll say that $Z$ stochastically dominates $Y$ if there is a coupling between $Z$ and $Y$ such that, for each $T > 0$, $\Pr[ Y \subseteq Z ] = 1$.

A Poisson point process with intensity function $\lambda(t)$ is a point process such that
\begin{enumerate}
    \item the set of points in any two disjoint intervals are independent, and 
    \item the number of points in any given interval $(a,b]$, with $a \leq b$, follows a Poisson random variable with parameter (mean) $\int_a^b \lambda(t) dt$.
\end{enumerate}
When $\lambda(t)$ is a constant function, say $\lambda(t) = \lambda$, then we say the Poisson point process is homogeneous with rate $\lambda$.
%
The following standard facts about Poisson processes will be helpful in our analysis.   

\begin{fact} 
\label{fact:expectations}
Given a homogeneous Poisson process $Z$ of rate $\lambda$, write $n_Z(T)$ for the number of events that occur before time $T$. Then $E[n_Z(T)] = T \lambda$.  Moreover, $\lim_{T \to \infty} n_Z(T)/T$ exists and equals $\lambda$ with probability $1$.
\end{fact}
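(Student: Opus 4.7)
The first claim is immediate from the definitions. By part 2 of the definition of a Poisson point process with constant intensity $\lambda$, the quantity $n_Z(T)$ is a Poisson random variable with parameter $\int_0^T \lambda\, dt = \lambda T$. Since a Poisson random variable with parameter $\mu$ has mean $\mu$, we conclude $E[n_Z(T)] = \lambda T$.

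For the almost-sure convergence, the plan is to reduce to the classical strong law of large numbers via the independent-increments property. Define $W_k = n_Z(k) - n_Z(k-1)$ for integer $k \geq 1$. By part 1 of the definition (independence on disjoint intervals), the $W_k$ are mutually independent, and by part 2 each $W_k$ is Poisson with mean $\lambda$, so in particular $E[W_k] = \lambda$ and $E[W_k^2] < \infty$. Since $n_Z(n) = \sum_{k=1}^n W_k$, Kolmogorov's strong law of large numbers gives $n_Z(n)/n \to \lambda$ almost surely as $n \to \infty$ through the positive integers.

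To extend to real $T \to \infty$, I would sandwich: for any $T \geq 0$, the monotonicity of $n_Z$ gives
\[
\frac{n_Z(\lfloor T \rfloor)}{\lfloor T \rfloor} \cdot \frac{\lfloor T \rfloor}{T} \;\leq\; \frac{n_Z(T)}{T} \;\leq\; \frac{n_Z(\lceil T \rceil)}{\lceil T \rceil} \cdot \frac{\lceil T \rceil}{T}.
\]
As $T \to \infty$, the ratios $\lfloor T \rfloor / T$ and $\lceil T \rceil / T$ both tend to $1$, and the integer-indexed subsequences tend to $\lambda$ almost surely by the previous paragraph, so $n_Z(T)/T \to \lambda$ almost surely.

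I do not anticipate a serious obstacle: the only subtlety is confirming that the independent-increments definition actually supplies i.i.d.\ unit-length count increments (so the strong law applies), and that passing from integer $T$ to real $T$ is justified by monotonicity of $n_Z$; both are routine consequences of the Poisson process definition stated in the preliminaries.
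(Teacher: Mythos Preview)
Your argument is correct. The paper itself does not supply a proof of this statement: it is listed among the ``standard facts about Poisson processes'' and left unproven. Your route---reading off the expectation from the Poisson distribution of $n_Z(T)$, invoking the strong law of large numbers on the i.i.d.\ unit-interval increments $W_k = n_Z(k)-n_Z(k-1)$, and then sandwiching between integer times via the monotonicity of $n_Z$---is exactly one of the standard textbook proofs, so there is nothing to compare against in the paper and no gap to flag.
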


\begin{fact} 
\label{fact:recent}
Suppose we have independent homogeneous Poisson processes $Z_1, \dotsc, Z_n$ of rates $\lambda_1, \dotsc, \lambda_n$ respectively. Then the probability that the earliest event (i.e., minimum point) in $\cup Z_i$ lies in $Z_i$ is $\lambda_i / (\sum_k \lambda_k)$.
\end{fact}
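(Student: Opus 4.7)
The plan is to reduce the statement to a standard computation about the minimum of independent exponential random variables, leveraging the correspondence between the first event of a Poisson process and an exponential waiting time. Specifically, I would first define, for each $i$, the time $\tau_i := \min Z_i$ of the first point of $Z_i$. From the defining property of a homogeneous Poisson process of rate $\lambda_i$ (the number of points in $(0,t]$ is Poisson with mean $\lambda_i t$), we get
\[
\Pr[\tau_i > t] = \Pr[\text{$Z_i$ has no point in $(0,t]$}] = e^{-\lambda_i t},
\]
so $\tau_i \sim \mathrm{Exp}(\lambda_i)$. Since the $Z_i$ are independent, the $\tau_i$ are mutually independent as well.

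Next, I would observe that the earliest event of $\bigcup_i Z_i$ occurs at time $\min_k \tau_k$, and lies in $Z_i$ precisely when $\tau_i = \min_k \tau_k$. Because independent continuous random variables have no ties almost surely, the statement reduces to showing that
\[
\Pr\!\left[\tau_i < \tau_k \text{ for all } k \neq i\right] = \frac{\lambda_i}{\sum_k \lambda_k}.
\]

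To compute this probability, I would condition on $\tau_i$ and use independence of the $\tau_k$:
\[
\Pr\!\left[\tau_i < \tau_k\ \forall k \neq i\right]
= \int_0^\infty \lambda_i e^{-\lambda_i t} \prod_{k \neq i} \Pr[\tau_k > t]\, dt
= \int_0^\infty \lambda_i e^{-t \sum_k \lambda_k}\, dt
= \frac{\lambda_i}{\sum_k \lambda_k},
\]
which finishes the proof. I do not expect any substantive obstacle here: the only step that requires care is justifying that the first point of a rate-$\lambda_i$ Poisson process is $\mathrm{Exp}(\lambda_i)$-distributed, but this follows immediately from the definition already recalled in Section~\ref{sec:model.poisson}. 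Everything else is a short, standard calculation.
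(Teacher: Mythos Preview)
Your proof is correct; this is the standard argument via independent exponential first-arrival times. The paper itself states Fact~\ref{fact:recent} without proof, treating it as a well-known property of Poisson processes, so there is no approach to compare against.
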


\begin{fact} 
\label{fact:combine}
Suppose $Z$ is a Poisson process with intensity function $\lambda(t)$, and $Z'$ is a random set generated by adding each $t \in Z$ to $Z'$ independently with probability $p(t)$.  Then $Z'$ is a Poisson point process, with intensity function given by $\lambda'(t) = \lambda(t)p(t)$.
\end{fact}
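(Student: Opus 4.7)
The plan is to verify that $Z'$ satisfies the two defining properties of a Poisson point process with intensity $\lambda'(t) = \lambda(t)p(t)$. Independence across disjoint intervals is immediate: for disjoint intervals $I_1, I_2$, the set $Z' \cap I_j$ is a deterministic function of $Z \cap I_j$ together with the independent thinning coin-flips for the points in $I_j$. Since $Z$ itself has independent increments, and the thinning decisions are mutually independent across points of $Z$, the sets $Z' \cap I_1$ and $Z' \cap I_2$ are built from disjoint, independent pieces of randomness and are therefore independent.

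The main step is to show that, for any interval $(a,b]$, the count $M$ of $Z'$-points in $(a,b]$ is Poisson with mean $\Lambda' := \int_a^b \lambda(t) p(t)\,dt$. Set $\Lambda := \int_a^b \lambda(t)\,dt$, and let $N$ be the number of $Z$-points in $(a,b]$, which is Poisson$(\Lambda)$ by assumption. I would invoke the standard order-statistic characterization of a (possibly non-homogeneous) Poisson process: conditional on $N = n$, the $n$ points of $Z$ in $(a,b]$ are distributed as i.i.d.\ samples from the density $\lambda(t)/\Lambda$ on $(a,b]$. Each such point is then independently retained in $Z'$ with probability $p(t)$ depending on its location, so the marginal probability that any single point survives is
$$q := \int_a^b \frac{\lambda(t)}{\Lambda}\, p(t)\,dt = \frac{\Lambda'}{\Lambda}.$$
Conditional on $N = n$ the retention decisions are independent Bernoulli$(p(\cdot))$, so $M \mid N = n$ is Binomial$(n, q)$.

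To finish I would use the classical Poisson-Binomial mixing identity:
$$\Pr[M = k] = \sum_{n \geq k} e^{-\Lambda}\frac{\Lambda^n}{n!}\binom{n}{k} q^k (1-q)^{n-k} = e^{-\Lambda q}\,\frac{(\Lambda q)^k}{k!},$$
which identifies $M$ as Poisson$(\Lambda q)$ = Poisson$(\Lambda')$, matching the claimed intensity. The main subtlety is justifying the conditional order-statistic description of the points of $Z$ in $(a,b]$ in the non-homogeneous case; this is a standard fact but requires a short argument if one does not wish to cite it. An alternative that sidesteps this is to compute the probability generating function of $M$ by partitioning $(a,b]$ into micro-intervals on which $\lambda$ and $p$ are essentially constant, noting that each contributes an independent Bernoulli$(\lambda(t)p(t)\,dt)$ to $Z'$ in the limit, and recovering the same Poisson$(\Lambda')$ distribution by a direct limit argument. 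Either route verifies both defining properties and completes the proof.
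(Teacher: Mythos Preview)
Your argument is correct and is exactly the standard proof of the Poisson thinning theorem. The paper itself does not give a proof of this statement: it is presented as one of several ``standard facts about Poisson processes'' and is simply cited without argument. So there is no paper proof to compare against; you have supplied what the paper omits. One minor remark: in the step where you conclude $M \mid N = n$ is Binomial$(n,q)$, it is worth being explicit that you are marginalizing over the random locations of the $n$ points, so that each retention indicator is \emph{marginally} Bernoulli$(q)$ and the $n$ indicators are i.i.d.\ (since the locations are i.i.d.\ and the coin flips are conditionally independent given locations). You allude to this, but spelling it out would preempt any quibble.
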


A corollary of Fact~\ref{fact:combine} is that if $Z$ is a homogeneous Poisson process of rate $\lambda$ and $Z'$ is a homogeneous Poisson process of rate $\lambda' < \lambda$, then $Z$ stochastically dominates $Z'$.  This is because we can couple $Z$ and $Z'$ by first realizing $Z$, then adding each element of $Z$ to $Z'$ independently with probability $\lambda' / \lambda$.

\section{An Upper Bound}


We construct an online policy whose value is a constant fraction of the optimal-in-hindsight policy.   To do so, we first develop a linear-programming (LP) upper bound on the value of the optimal-in-hindsight policy for large time horizons.\footnote{Taking the limit as the time horizon grows allows us to ignore lower-order terms.}  The value of the optimal solution is the expectation over the randomness in arrivals and departures of instance-optimal solutions, and so can be written as the expectation of the sum of match values.  We then transform this LP into one with strictly more constraints but the same optimal value.  We will use this second LP to develop and analyze our online policy.

In the following LP, the variable $\alpha_{xy}$ is the fraction of nodes of type $y$ which match to preexisting nodes of type $x$, when considered over all arrivals of agents of type $y$.

\begin{alignat}{5}
& \mbox{\textbf{LP-UB:}} &\quad & \mbox{maximize} &\quad & \displaystyle\sum\limits_{x,y \in X} v_{xy}\alpha_{xy}\lambda_{y} \nonumber\\
& & & \mbox{subject to} & & \displaystyle \alpha_{xy} \leq  \frac{\lambda_{x}}{\mu_{x}}  &\quad &\forall x,y \in X \label{eq.LP.cond1}\\
& & & & & \displaystyle\sum\limits_{y \in X}\alpha_{xy}\lambda_{y} + \sum\limits_{y \in X}\alpha_{yx}\lambda_{x} \leq  \lambda_{x}  &\quad &\forall x \in X \label{eq.LP.cond2}\\
& & & & & \alpha_{xy} \in [0,1], &\quad &\forall x,y \in X \label{eq.LP.range}
\end{alignat}


Constraint~\eqref{eq.LP.cond1} bounds the fraction of the time that some node of type $y$ matches to some previously arrived node of type $x$ by the probability that a node of type $x$ is present in the system at any given time. Constraint~\eqref{eq.LP.cond2} bounds the total rate at which a type can match by the total rate at which the type arrives.  On the left-hand-side, the first sum  captures the rate at which a type matches to those arriving after it; the second sum captures the rate at which a type matches to those who arrived before it. Note that constraints~\eqref{eq.LP.cond2} and~\eqref{eq.LP.range} together imply that $\sum_{x \in X}\alpha_{xy} \leq 1$ for all $y$. This makes intuitive sense: the total fraction of the time that a node matches to any preexisting type cannot be greater than $1$. 

We will first demonstrate that the value of LP-UB represents an upper bound on the expected value of the max-weight offline matching. 

\begin{lemma}
\label{lem:ub}
Let $v^*$ be the optimal value of LP-UB.  Then the value of any matching policy, including policies with hindsight, is at most $v^*$.
\end{lemma}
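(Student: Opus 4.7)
The plan is to take any matching policy $\pi$ (possibly with hindsight), encode its long-run matching behavior as a vector $(\alpha_{xy})$ of LP variables, and show that $(\alpha_{xy})$ is feasible for LP-UB.  The claim then follows because the value of $\pi$ equals $\sum_{x,y}v_{xy}\alpha_{xy}\lambda_y\leq v^{*}$.  For each ordered pair $(x,y)\in X\times X$ and horizon $T>0$, let $M_{xy}(T)$ be the number of matches $\pi$ makes in $[0,T]$ in which the type-$x$ agent arrived no later than the type-$y$ agent (ties between same-type agents broken arbitrarily).  Every realized match is counted by exactly one $M_{xy}(T)$, so the expected value accrued by time $T$ is $\sum_{x,y}v_{xy}E[M_{xy}(T)]$; setting $\alpha_{xy}:=\limsup_{T\to\infty}E[M_{xy}(T)]/(\lambda_y T)$, it suffices to verify the LP constraints for this family.

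The easier constraint is~\eqref{eq.LP.cond2}.  Every match in $[0,T]$ consumes one agent from each of its two endpoint types (two agents in the $x=y$ case), so the number of matches involving an agent of type $x$ is at most the number of type-$x$ arrivals in $[0,T]$.  Taking expectations, invoking Fact~\ref{fact:expectations} on the right, dividing by $T$, and taking limsup gives $\sum_y\alpha_{xy}\lambda_y+\sum_y\alpha_{yx}\lambda_x\leq\lambda_x$; the factor of two from the $x=y$ case is absorbed because $\alpha_{xx}\lambda_x$ contributes to both sums.

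Constraint~\eqref{eq.LP.cond1} is the main step.  Because any match at time $t$ requires both endpoints to still be present at $t$, each match counted in $M_{xy}(T)$ consists of agents $i,j$ of types $x,y$ with $a_i\leq a_j$ and with $a_j$ lying within $i$'s realized natural lifespan $[a_i,a_i+W_i]$, where $W_i\sim\mathrm{Exp}(\mu_x)$ is the Poisson departure time $i$ would realize absent matching.  Call such a pair \emph{time-compatible}, and let $C_{xy}(T)$ count the time-compatible $(x,y)$-pairs with $a_j\leq T$; then $M_{xy}(T)\leq C_{xy}(T)$ deterministically, and crucially $C_{xy}(T)$ is a function only of the arrival and lifespan processes and is independent of $\pi$.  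Since the type-$y$ arrivals are independent of the type-$x$ arrivals and lifespans, the expected number of type-$y$ arrivals falling in a single $i$'s lifespan is $\lambda_y\cdot E[W_i]=\lambda_y/\mu_x$; combining with Fact~\ref{fact:expectations} (which gives $\lambda_x T$ type-$x$ arrivals in expectation) and Wald's identity yields
$$E[C_{xy}(T)]\;\leq\;\lambda_x T\cdot\lambda_y/\mu_x.$$
Dividing by $\lambda_y T$ and taking limsup gives $\alpha_{xy}\leq\lambda_x/\mu_x$.  The range condition~\eqref{eq.LP.range} is immediate because $M_{xy}(T)\leq N_y(T)$, the count of type-$y$ arrivals.

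The main obstacle is precisely this compatibility bound: a hindsight policy can delay, cancel, or reshuffle matches arbitrarily, so the upper bound must depend only on features of the arrival and lifespan processes that the policy cannot alter.  The time-compatibility condition captures exactly what any policy must respect — both endpoints must be present at the match time — and reduces the remaining argument to a routine calculation on independent Poisson and exponential data.  Once LP-feasibility of $(\alpha_{xy})$ is established, the bound $\mathrm{value}(\pi)\leq\sum_{x,y}v_{xy}\alpha_{xy}\lambda_y\leq v^{*}$ is immediate.
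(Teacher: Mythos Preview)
Your approach is essentially the paper's: extract $\alpha_{xy}$ from normalized expected match counts, then verify LP feasibility via the two observations that each agent matches at most once (giving~\eqref{eq.LP.cond2}) and that any match requires the later-arriving endpoint to land inside the earlier endpoint's lifespan (giving~\eqref{eq.LP.cond1}). The paper routes through the max-weight offline matching $M(T)$ rather than an arbitrary policy, and its compatibility bound counts type-$y$ arrivals at which some type-$x$ node is present rather than all compatible $(x,y)$ pairs, but these are cosmetic differences.

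There is one technical slip. With $\alpha_{xy}:=\limsup_T E[M_{xy}(T)]/(\lambda_y T)$, the objective bound and constraint~\eqref{eq.LP.cond1} go through, but your derivation of~\eqref{eq.LP.cond2} does not: from $\sum_y E[M_{xy}(T)]+\sum_y E[M_{yx}(T)]\le \lambda_x T$ you only get $\limsup_T$ of the \emph{sum} bounded by $\lambda_x$, whereas $\sum_y\alpha_{xy}\lambda_y+\sum_y\alpha_{yx}\lambda_x$ is a sum of limsups, which can be strictly larger. (The paper's choice of $\liminf$ has the mirror-image issue on the objective side, where it asserts $\liminf$ of a sum equals the sum of $\liminf$'s.) The standard fix is a compactness argument: pass to a subsequence $T_n\to\infty$ along which every $E[M_{xy}(T_n)]/T_n$ converges simultaneously (possible since these lie in a compact box), define $\alpha_{xy}$ as those subsequential limits, and then both the objective equality and all constraints pass to the limit along the common subsequence.
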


To prove Lemma~\ref{lem:ub}, we
consider the set of agents who arrive up to some time $T$, and interpret the constraints of LP-UB as conditions on matchings in the induced graph of potential matches.  These finite conditions include lower order terms, but these disappear when taking the limit as $T$ grows large.

\begin{proof}
For each $T > 0$, let $G_T$ be a random weighted graph whose nodes are the agents that arrive before time $T$.  Write $N_x(T)$ for the set of nodes of type $x$.  Recall that for a given node $i$, we write $a_i$ for the time that $i$ arrives and $d_i$ for the time that it departs.  

Two agents $i$ and $j$ (of types $x$ and $y$ respectively) can be matched if $[a_i, d_i] \cap [a_j, d_j] \neq \emptyset$, in which case the weight of the match is $v_{xy}$.  The value of the maximum-weight matching in $G_T$ is an upper bound on the value obtained by any matching policy over time interval $[0,T]$.

Write $M(T)$ for the maximum-weight matching in $G_T$, and write $V(T)$ for its weight. Then the value of any dynamic matching policy is at most
\[ \liminf_{T \to \infty} \frac{1}{T} \cdot E[V(T)], \]
where the expectation is over the realization of the graphs $G_T$.  So it suffices to show that this quantity is at most $v^*$.


Fix some choice of $T$.  
For $x \in X$, write $n_x(T) = |N_x(T)|$ for the random variable denoting the number of nodes of type $x$ that arrive in $[0,T]$.  
For $x,y \in X$, write $\Psi_{xy}(T)$ for the number of nodes of type $y$ that (a) are matched (in $M(T)$) to a node of type $x$, and (b) arrived later than the node to which they matched.  Define $\alpha_{xy} = \liminf_{T \to \infty} \frac{1}{T}E[\Psi_{xy}(T)]/\lambda_y$, where the expectation is over the realization of the graphs $G_T$.  

Note that 
$V(T) = \sum_{x,y} \Psi_{xy}(T) v_{xy}$,
the total value of all matches made in $M(T)$.  Taking expectations, dividing by $T$ and taking limit inferiors gives
\[ \liminf_{T \to \infty} \frac{1}{T} E[V(T)] = \sum_{x,y} v_{xy} \liminf_{T \to \infty} \frac{1}{T} \cdot E[\Psi_{xy}(T)]=\sum_{x,y}v_{xy}\lambda_y\alpha_{xy}.  \]

Since each node of type $x$ can match at most once, and matches either to a node that arrived before it or after it, we have
\[ n_x(T) \geq \sum_{y \in X}  \Psi_{yx}(T) + \sum_{y \in X}  \Psi_{xy}(T). \]
Dividing by $T$, taking expectations, and taking a limit inferior on both sides yields
\[ \liminf_{T \to \infty}\frac{1}{T} \cdot E[n_x(T)] \geq \liminf_{T \to \infty}\frac{1}{T}\cdot \left( \sum_{y \in Y} E[ \Psi_{yx}(T)] + \sum_{y \in X} E[ \Psi_{xy}(T)] \right). \]
Since $\lim_{T \to \infty} n_x(T)/T = \lambda_x$ with probability $1$ (by Fact~\ref{fact:expectations}), and by definition of $\alpha_{xy}$ and $\alpha_{yx}$, we have
\[ \lambda_x \geq \sum_{y \in Y} \lambda_x \alpha_{yx} + \sum_{y \in Y} \lambda_y \alpha_{xy}. \]
That is, the values $\alpha_{xy}$ satisfy condition~\eqref{eq.LP.cond2} of LP-UB.

Next, define $m_{yx}(T)$ to be the number of nodes of type $x \in X$ such that, at the time that the node arrives, there is at least one node of type $y$ present.  Since $\Psi_{yx}(T)$ counts each node of type $x$ that matches to a node of type $y$ that is present when the former arrives,
\[ \Psi_{yx}(T) \leq m_{yx}(T). \]
Consider now the expected value of $m_{yx}(T)$ over the realization of graph $G_T$.
As the arrival process is independent across types and uniform across time, $m_{yx}(T)$ is precisely $n_x(T)$ times the probability that there is at least one node of type $y$ present at a uniformly random time in $[0,T]$.  This latter probability is at most $\frac{1}{T} \cdot E[\sum_{i \in N_y(T)}(d_i - a_i)]$, the total sum of type-$y$ node lifetimes as a fraction of $T$.  Thus  
\[ E[\Psi_{yx}(T)] \leq E[m_{yx}(T)] \leq E\left[n_x(T) \cdot \frac{\sum_{i \in N_y}(d_i - a_i)}{T}\right]. \]
Since the arrivals and departures of different nodes in $N_y$ are independently and identically distributed, we have
\[ E\left[n_x(T) \cdot \frac{\sum_{i \in N_y}(d_i - a_i)}{T}\right] = E[n_x(T)] \cdot \frac{E[n_y(T)]}{T \mu_y}  = \frac{T \lambda_x \lambda_y}{\mu_y}, \]
where the first equality follows since the expected amount of time a node of type $y$ spends in the system is $1/\mu_y$ by the Poisson process, 
and the second equality follows by Fact~\ref{fact:expectations}. We conclude that
\[ E[\Psi_{yx}(T)] \leq T \lambda_x \lambda_y / \mu_y \]
and hence
\[ \liminf_{T \to \infty} \frac{1}{T} \cdot  E[\Psi_{yx}(T)] \leq \lambda_x \lambda_y / \mu_y. \]
By the definition of $\alpha_{yx}$, this implies
\[ \alpha_{yx} \leq \lambda_y / \mu_y. \]
That is, the values $\alpha_{xy}$ satisfy condition~\eqref{eq.LP.cond1} of LP-UB.

The limit of the expected match value is therefore the value of a feasible solution to LP-UB, and in particular is at most the optimal value of LP-UB.
\end{proof}

\section{Online Matching Policy}
\label{sec:policy}

We now present our online matching policy, \textsc{OnlineMatch}.
Our policy first solves LP-UB in advance of any arrivals, and then uses the solution to guide its matching decisions.  As demonstrated in the previous section, the solution to the LP-UB should be thought of as describing the optimal matching rates between types, subject to constraints that hold as time approaches infinity. Our goal is to create a policy that approximately matches the value of this LP, which we will achieve by obtaining a constant approximation to these matching rates. 

\begin{algorithm}[t]
\alginit
\caption{Algorithm \textsc{OnlineMatch}
\label{alg.avail}}
\setcounter{AlgoLine}{0}
	\SetKwInOut{Require}{require}
	\SetKwInOut{Input}{input}
	\Require{Scaling parameter $\gamma \in (0,1]$}
	\Input{Online arrivals of agents}
	
	\BlankLine
	\label{alg.line1}$(\alpha_{xy}) := $ Solution to LP-UB\;
	\For{\label{alg.line2}each agent $i$ arriving at time $t$, say of type $y \in X$}{
    	\For{\label{alg.line3}each type $x \in X$ in a uniformly random order}{
            \uIf{\label{alg.line4}there is at least one unmatched agent $j$ of type $x$ in the market}{match $i$ and $j$ with probability $\gamma \cdot \alpha_{xy} \cdot \max\left(1,\frac{\mu_x}{\lambda_x}\right)$}\label{alg.line5}
        }
	}
\end{algorithm}

Suppose that an agent, say agent $i$ of type $y$, arrives at time $t$.  The algorithm will then iterate through all types (including $y$) in a fixed order (line~\ref{alg.line3}).  For each considered type $x$, if there are any agents of type $x$ present and unmatched in the market, the algorithm will select one of them arbitrarily and attempt to match it with agent $i$.  With probability 
$\gamma \cdot \alpha_{xy} \cdot \max\left(1,\frac{\mu_x}{\lambda_x}\right)$ 
the match occurs, in which case the algorithm completes and awaits the next agent arrival.  Otherwise, the algorithm moves on to the next type in $X$.  If agent $i$ is not matched after every $x \in X$ has been considered, then we leave agent $i$ unmatched and await the next arrival.

The match probability on line~\ref{alg.line5} deserves some discussion.  This probability depends on the solution to LP-UB, and is the mechanism by which the algorithm attempts to follow the matching rates proposed by the LP.  One might be tempted to simply use $\alpha_{xy}$ as the match probability.
However, when constructing an online policy we must consider the difference between unconditional match rates and matching rates conditional on agent types being present in the market.  It may be that a particular type is extremely unlikely to be present to match during a given attempt. 
Consider a problem instance that includes a type $x$ with arrival rate 1 and departure rate $1/\epsilon$, and a corresponding LP solution where $\alpha_{xy} = \epsilon$ for some $y$ (note that this does not immediately violate any constraints, as the upper bound on $\alpha_{xy}$ could be as high as $\epsilon$). The probability that any agent of type $x$ will be present when an agent of type $y$ arrives is at most $\epsilon$ (see Lemma~\ref{lem:present}). Thus an online policy that attempts to match agents of type $x$ to agents of type $y$ with probability $\epsilon$ will actually generate such a match with probability no greater than $\epsilon^2$. In order to actually achieve the $\epsilon$ fraction that we desire, we must scale $\alpha_{xy}$ by $1/\epsilon$, or $\frac{\mu_{x}}{\lambda_{x}}$. Intuitively, we have scaled up the match probability according to the probability that $x$ is present, in order to achieve the rate recommended by LP-UB.  This motivates our choice of scaling factor on line~\ref{alg.line5}.

The algorithm actually scales the probability by an additional factor of $\gamma$, which is a tunable parameter of the algorithm.  This is to ensure that each agent has a constant probability of being available in the system unmatched when its ideal match arrives.  We will optimize $\gamma$ as part of our analysis of the algorithm.

\subsection{Analysis}

In this section we bound the competitive ratio of Algorithm \textsc{OnlineMatch}.

\begin{theorem}
\label{thm:main}
Algorithm~\ref{alg.avail} is a 8-approximation to the value of LP-UB.
\end{theorem}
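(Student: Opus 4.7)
The plan is to lower bound the long-run value of Algorithm~\ref{alg.avail} by $v^*/8$, where $v^*$ is the optimum of LP-UB. Writing the algorithm's value as $\sum_{x,y}\lambda_y v_{xy}P_{xy}$, where $P_{xy}$ is the steady-state probability that an arriving agent of type $y$ is matched with an unmatched agent of type $x$, it suffices to show $P_{xy}\geq \alpha_{xy}/8$ for every pair $(x,y)$ under a suitable choice of $\gamma$ (we will end up with $\gamma=1/2$). I would decompose $P_{xy}$ into (i) the availability factor $P(N_x(t)\geq 1)$, where $N_x(t)$ is the number of unmatched type-$x$ agents at time $t$, (ii) the random-order factor that $x$ is selected as the first successful type in the iteration, and (iii) the coin-flip probability $p_{xy}:=\gamma\alpha_{xy}\max(1,\mu_x/\lambda_x)$ from line~\ref{alg.line5}.

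The central ingredient is an availability bound. The total rate at which type-$x$ agents are removed by matching across all counterparty types is at most
\[ R_x \;:=\; \sum_y \lambda_y p_{xy} \;=\; \gamma\max(1,\mu_x/\lambda_x)\sum_y \lambda_y \alpha_{xy} \;\leq\; \gamma\max(\lambda_x,\mu_x), \]
where the last inequality uses constraint~\eqref{eq.LP.cond2}. Hence in any state $n\geq 1$, the death rate of $N_x$ is at most $\mu_x n+R_x\leq (\mu_x+R_x)n$. A standard birth-death coupling then shows that $N_x(t)$ stochastically dominates an M/M/$\infty$ queue with arrival rate $\lambda_x$ and per-agent service rate $\mu_x+R_x$, whose Poisson stationary distribution has mean $\lambda_x/(\mu_x+R_x)\geq \min(1,\lambda_x/\mu_x)/(1+\gamma)$. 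Combined with the elementary inequality $1-e^{-z}\geq z/(1+z)$ valid for $z\geq 0$, this yields $P(N_x(t)\geq 1)\cdot \max(1,\mu_x/\lambda_x)\geq 1/(2+\gamma)$ in both the $\lambda_x\geq\mu_x$ and $\lambda_x<\mu_x$ regimes.

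For the random-order factor I would reformulate the iteration using pre-flipped independent coins: draw $C_z$ with bias $p_{zy}$ for each type $z$, let $S=\{z:C_z=H,\,N_z\geq 1\}$, and observe that the algorithm matches $y$ with the element of $S$ that appears first in the uniform random permutation. Writing $q_z:=p_{zy}P(N_z\geq 1)$, the bound $P(N_z\geq 1)\leq \min(1,\lambda_z/\mu_z)$ combined with constraint~\eqref{eq.LP.cond2} gives $\sum_z q_z\leq \gamma$. By symmetry over the permutation, $P_{xy}=E[\mathbf{1}[x\in S]/|S|]$, and Jensen's inequality applied to the convex function $1/|S|$ conditional on $x\in S$ gives
\[ P_{xy} \;\geq\; \frac{q_x}{1+\gamma} \;\geq\; \frac{\gamma\,\alpha_{xy}}{(1+\gamma)(2+\gamma)}. \]
At $\gamma=1/2$ the right-hand side equals $2\alpha_{xy}/15\geq \alpha_{xy}/8$, establishing the theorem after summing $\lambda_y v_{xy}$ against both sides.

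The principal technical obstacle is bounding the conditional expectation $E[|S|\mid x\in S]=1+\sum_{z\ne x}p_{zy}\,P(N_z\geq 1\mid N_x\geq 1)$ by $1+\gamma$, which requires $P(N_z\geq 1\mid N_x\geq 1)\leq P(N_z\geq 1)$ for each $z\neq x$. The counts $\{N_z\}_{z\in X}$ are coupled through shared matching dynamics and so are not independent, but matching consumes arriving counterparties, so one expects presence across types to be at worst negatively correlated. Formalizing this via a monotone coupling of the underlying Poisson arrivals—or by reindexing the argument so that the correlation is absorbed into a slightly weaker constant through the identity $E_\sigma[\prod_{z<_\sigma x}(1-a_z)]=\int_0^1 \prod_{z\ne x}(1-u a_z)\,du$ and a Weierstrass-type bound—is the most delicate step in pushing the proof through.
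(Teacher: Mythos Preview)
Your outline has two genuine gaps, and both stem from working directly with the counts $N_z$ of \emph{available} (unmatched) agents rather than with the simpler notion of \emph{presence} that the paper exploits.

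\textbf{Birth-rate gap in the availability coupling.} You claim $N_x$ stochastically dominates an M/M/$\infty$ queue with arrival rate $\lambda_x$ and per-agent service rate $\mu_x+R_x$. The death-rate comparison is fine, but the birth-rate comparison fails: when a type-$x$ agent arrives, \textsc{OnlineMatch} may match her immediately to some waiting agent, in which case $N_x$ does \emph{not} increase. The effective birth rate of $N_x$ in a given state is $\lambda_x$ times the probability she survives the inner loop, namely at most $\lambda_x\prod_{z:N_z\geq 1}(1-p_{zx})$. Since $p_{zx}=\gamma\alpha_{zx}\max(1,\mu_z/\lambda_z)$ can be close to $\gamma$ for each $z$ (only $\alpha_{zx}\le \lambda_z/\mu_z$ is guaranteed, not a small sum), there are states in which this birth rate is far below $\lambda_x$, so the coupling does not go through state-by-state. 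The paper deals with this by isolating exactly the event ``a type-$x$ agent arrives and does not match immediately'' (their $Z^1_x$) and lower-bounding its rate \emph{on average over the presence indicators}, not uniformly over states.

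\textbf{Correlation gap in the random-order bound.} You correctly flag that $E[|S|\mid x\in S]\le 1+\gamma$ would require $P(N_z\ge 1\mid N_x\ge 1)\le P(N_z\ge 1)$, i.e.\ negative correlation of availability across types. This can fail. Take two types $x,z$ that never match to each other ($\alpha_{xz}=\alpha_{zx}=0$) but both match to a common counterparty type $w$: the event $N_x\ge 1$ is evidence that few $w$-agents arrived recently, which in turn makes $N_z\ge 1$ \emph{more} likely. So availability can be positively correlated, and neither the Jensen step nor the integral/Weierstrass alternative you propose circumvents this, since both still require controlling $P(N_z\ge 1)$ conditional on the state that makes $x$ available.

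The paper sidesteps both issues by replacing availability with presence (arrived and not yet departed, ignoring matching). Presence events $P_z(t)$ are independent across $z$, which kills the correlation problem; the reach-$x$ event ($Z^4_{x,y}$) is then defined relative to presence rather than availability, and the ``type $x$ is available'' condition is rebuilt from three pieces ($Z^1_x$: arrived and did not match immediately; $Z^2_x$: no later match; $Z^3_x$: no departure), the last two of which have constant rate. The only remaining correlation is between $Z^1_x$ and $Z^4_{x,y}$, and because both are monotone in the same presence indicators it is \emph{positive}, which goes the right way. That is the idea your outline is missing.
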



\paragraph{Proof Strategy.}
Our strategy for proving Theorem~\ref{thm:main} is to show that, whenever a node of type $y \in X$ arrives to the market, it will match with a node of type $x \in X$ with probability at least $\alpha_{xy}/8$.  If this is true for each $x$ and $y$, then taking a sum over all types (multiplied by the rates at which they arrive) yields the desired 8-approximation. Note that since our objective is the long-run average match value, we will focus on event probabilities in the long-run steady-state of Algorithm~\ref{alg.avail}.

To show the bound of $\alpha_{xy}/8$ we consider a sequence of events that lead to a newly-arrived agent of type $y$ matching to an agent of type $x$.  First, \textsc{OnlineMatch} must reach the iteration of its main loop corresponding to type $x$, meaning that the agent of type $y$ does not match to a previously-considered type.  Next, there must be an agent of type $x$ that has arrived previously but (a) has not yet departed, and (b) has not yet been previously matched.  Finally, the node of type $y$ must choose to actually match to the node of type $x$, passing the probabilistic check on Line~\ref{alg.line5} of \textsc{OnlineMatch}.

Our main challenge in the analysis of \textsc{OnlineMatch} is that the events described above are correlated with each other and with the state of the market. Whether a certain type of agent is available in the market to be matched at time $t$ depends on the types of other agents present in the market, as this influences the probability that they themselves have previously matched.  Thus, the availability of different types of agents are correlated through the pool of agents waiting to be matched at any given time.  

We address this difficulty by coupling the events described above with Poisson point processes that are designed to be independent or (in one case) positively correlated with each other.
%
That is, while agents in the market are matched at rates that vary over time with the composition of available agents, these rates are subject to uniform upper and lower bounds that reflect maximum and minimum possible matching rates.  By relating to these extreme matching scenarios, we can derive uniform bounds on the success rate of matching attempts under arbitrary market conditions.

\paragraph{Defining the Poisson Point Processes.}
We begin by introducing the notion of an agent being \emph{present} in the market, and bounding the probability that a node of a given type is present at any given time.  We will say an agent $i$ is present at time $t$ if it has arrived but not yet departed; that is, if $a_i \leq t < d_i$.  We'll say the node is \emph{available} at time $t$ if it is present and has not yet been matched to another node.

Importantly, an agent can be present but not available: even after an agent has been matched, one could simulate the departure process for that agent as though they had not matched, and we view the agent as being present until they leave under that simulated process.  The advantage of considering presence, rather than availability, is that whether an agent is present at a given time depends only on their arrival and departure times, and is independent of all other agents in the market.


\begin{lemma}
\label{lem:present}
Choose a type $x \in X$.  Then over all randomness in arrivals and departures, the steady-state probability that at least one agent of type $x$ is present is $1 - e^{-\lambda_x / \mu_x}$, which is at most $\min\{\lambda_x / \mu_x, 1\}$.
\end{lemma}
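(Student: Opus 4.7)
The plan is to exploit the fact that, by definition, whether an agent is \emph{present} depends only on its arrival and departure times, not on any matching decisions. Since arrivals of type $x$ are Poisson with rate $\lambda_x$ and each agent's lifetime is an independent Exp$(\mu_x)$ random variable, the presence process for type $x$ is that of an M/M/$\infty$ queue, whose steady-state distribution on the number of agents present is Poisson with mean $\lambda_x/\mu_x$. The probability of at least one present agent is then $1 - e^{-\lambda_x/\mu_x}$.

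To make this derivation precise using the tools already introduced, I would fix a time $t$ and consider the set of type-$x$ agents that are present at $t$. An agent who arrived at time $s \in [0,t]$ is still present at $t$ precisely when its Exp$(\mu_x)$ lifetime exceeds $t-s$, which occurs independently with probability $e^{-\mu_x(t-s)}$. Applying Fact~\ref{fact:combine} to the Poisson arrival process of type $x$ with thinning probability $p(s) = e^{-\mu_x(t-s)}$, the set of type-$x$ agents present at time $t$ is itself a Poisson process on $[0,t]$ with intensity $\lambda(s) = \lambda_x e^{-\mu_x(t-s)}$. Hence the number of type-$x$ agents present at time $t$ is Poisson distributed with mean
\[
\int_0^t \lambda_x e^{-\mu_x(t-s)}\, ds \;=\; \frac{\lambda_x}{\mu_x}\bigl(1 - e^{-\mu_x t}\bigr).
\]
Taking $t \to \infty$, the mean converges to $\lambda_x/\mu_x$, and the probability of zero present agents converges to $e^{-\lambda_x/\mu_x}$, so the steady-state probability of at least one present agent is $1 - e^{-\lambda_x/\mu_x}$.

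For the stated upper bound, I would invoke the elementary inequalities $1 - e^{-z} \le z$ and $1 - e^{-z} \le 1$ for $z \ge 0$, applied to $z = \lambda_x/\mu_x$ (the case $\mu_x = \infty$ gives $z = 0$ and both sides equal $0$). There is no real obstacle here; the only thing to be mindful of is that the statement is about steady state, so one must either argue via the $t\to\infty$ limit as above, or invoke the standard fact that the M/M/$\infty$ queue mixes to its Poisson$(\lambda_x/\mu_x)$ stationary distribution. Using the intensity-integration argument through Fact~\ref{fact:combine} keeps the proof self-contained within the framework already set up in Section~\ref{sec:model.poisson}.
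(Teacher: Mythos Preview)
Your proposal is correct and complete, but it takes a different route from the paper. The paper sets up the birth--death Markov chain on the states $\{S_0, S_1, S_2, \dotsc\}$ (with $S_i$ meaning $i$ agents of type $x$ present), writes the balance equations $(\lambda_x + i\mu_x)\pi_i = \lambda_x \pi_{i-1} + (i+1)\mu_x \pi_{i+1}$, solves them recursively to get $\pi_i = (\lambda_x/\mu_x)^i \pi_0 / i!$, and normalizes to obtain $\pi_0 = e^{-\lambda_x/\mu_x}$. You instead fix a time $t$, thin the arrival process by the survival probability $e^{-\mu_x(t-s)}$ via Fact~\ref{fact:combine}, read off that the count at time $t$ is Poisson with mean $\tfrac{\lambda_x}{\mu_x}(1-e^{-\mu_x t})$, and send $t\to\infty$. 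Your argument has the advantage of staying entirely within the Poisson-process toolkit already laid out in Section~\ref{sec:model.poisson} and of yielding the full transient law as a byproduct; the paper's balance-equation argument is the textbook M/M/$\infty$ derivation and goes straight to stationarity without a limit, at the cost of a short recursion. Both are equally rigorous for this lemma.
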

\begin{proof}
Consider the Markov chain with states $\{S_0, S_1, S_2, \dotsc\}$ where $S_i$ corresponds to exactly $i$ agents of type $x$ being present.  Then the market transitions from state $S_i$ to state $S_{i+1}$ at rate $\lambda_x$, and (for each $i \geq 1$) from state $S_i$ to state $S_{i-1}$ at rate $i \cdot \mu_x$.  Write $\pi_i$ for the steady-state probability of being in state $S_i$.  Then the balance equations give that, for all $i \geq 1$, 
\[ (\lambda_x + i \cdot \mu_x)\pi_i = \lambda_x \cdot \pi_{i-1} + (i+1)\mu_x \cdot \pi_{i+1}. \]
Solving in terms of $\pi_0$ yields $\pi_i = \left(\frac{\lambda_x}{\mu_x}\right)^i\cdot\frac{1}{i!}\cdot \pi_0$ for each $i \geq 1$.  Since $\sum_{i \geq 0} \pi_i = 1$, we conclude that $1 = \pi_0 \left( \sum_{i \geq 0} \frac{(\lambda_x / \mu_x)^i}{i!}\right) = \pi_0 \cdot e^{\lambda_x / \mu_x}$.  Thus $\pi_0 = e^{-\lambda_x / \mu_x}$ and hence $1 - \pi_0 = 1 - e^{-\lambda_x / \mu_x}$ as required.
\end{proof}

We will write $P_x(t)$ for the event that at least one agent of type $x$ is present at time $t$.  Then $\neg P_x(t)$ is the event that no agents of type $x$ are present at time $t$.  And to disambiguate what happens at the moment an agent arrives: when an agent $i$ of type $x$ arrives to the market at time $t$ and we run \textsc{OnlineMatch} for that agent, we take $P_x(t)$ to mean the presence of agents of type $x$ other than $i$.  In particular, it's possible for $P_x(t)$ to be false during that execution of \textsc{OnlineMatch}.

We make the following observations about presence events:
\begin{enumerate}
    \item $P_x(t_1)$ and $P_y(t_2)$ are independent events for each pair of types $x \neq y$ and any pair of times $t_1 \leq t_2$.  
    \item For each $t_1 < t_2$, $P_x(t_1)$ and $P_x(t_2)$ are positively correlated.
    \item For each $t_1 < t_2 < t_3$, $P_x(t_1)$ and $P_x(t_3)$ are conditionally independent given $P_x(t_2)$.
\end{enumerate}

%
We next define some terminology about agents matching with each other.
We'll say that agent of type $y$ \emph{considers} matching to an agent of type $x$ if, during the execution of \textsc{OnlineMatch}, we enter the iteration of the loop on line~\ref{alg.line3} corresponding to type $x$.  We'll say that the agent \emph{attempts} to match to an agent of type $x$ if, in addition to considering the match, the probabilistic match on line~\ref{alg.line5} \emph{would} occur (regardless of whether or not the condition on line~\ref{alg.line4} evaluates to true).  In other word, we can imagine pre-evaluating the probabilistic check on line~\ref{alg.line5} before checking the condition on line~\ref{alg.line4}, and an attempted match corresponds to iterations in which the probabilistic check passes.  The attempted match is \emph{successful} if, in addition, at least one node of type $x$ is present and available.

We are now ready to analyze the presence and availability of agents in the market. Consider the following events, each of which follows a point process.  

\begin{itemize}
    \item Event $Z^1_x$: An agent of type $x$ arrives and, for every type $z$ that is currently present in the market, the agent does not attempt to match to type $z$.
    \item Event $Z^2_x$: When there is at least one agent of type $x$ present in the market, $Z^2_x$ is the event that an agent arrives and (by pre-evaluating its probabilistic choice on Line~\ref{alg.line5}) we see that this agent would attempt to match to an agent of type $x$ if it reaches that iteration of the main loop.  Otherwise, when there is no agent of type $x$ present in the market, $Z^2_x$ follows an independent Poisson clock of rate $\gamma \sum_{y \in X} \lambda_y \alpha_{xy} \max(1, \mu_x / \lambda_x)$.
    \item Event $Z^3_x$: When there is exactly one agent of type $x$ present in the market, $Z^3_x$ is the event that that agent departs.  Otherwise, when there is not exactly one agent of type $x$, $Z^3_x$ follows an independent Poisson clock of rate $\mu_x$.
    \item Event $Z^4_{x,y}$: An agent of type $y$ arrives and it attempts to match to an agent of type $x$, and moreover it did not attempt to match to any other type that is present before attempting to match to type $x$.
\end{itemize}

To motivate our interest in these events, let $A_{x,y}$ be the following aggregate event.  We say event $A_{x,y}$ occurs at time $t$ if $Z^4_{x,y}$ occurs at time $t$, and moreover there is some $t' < t$ such that $Z^1_x$ occurs at time $t'$ and neither $Z^2_x$ nor $Z^3_x$ occurs at times between $t'$ and $t$.  We claim that whenever event $A_{x,y}$ occurs, an agent of type $y$ matches to an agent of type $x$.  

\begin{lemma}
\label{lem:aggregate.event}
For each occurrence of event $A_{x,y}$ as defined above, an agent of type $y$ matches to an agent of type $x$ who arrived previously.
\end{lemma}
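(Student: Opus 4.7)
The plan is to decompose the event $A_{x,y}$ into its three constituent pieces---$Z^1_x$ at some time $t'$, the absence of both $Z^2_x$ and $Z^3_x$ on $(t',t)$, and $Z^4_{x,y}$ at time $t$---and show that each piece forces a corresponding condition on the state of \textsc{OnlineMatch}. Specifically, $Z^1_x$ will seed the market with an available type-$x$ agent just after $t'$; the absence of $Z^2_x$ and $Z^3_x$ on $(t',t)$ will keep at least one type-$x$ agent available throughout this interval; and $Z^4_{x,y}$ will guarantee that the arriving type-$y$ agent's loop reaches the type-$x$ iteration with the line~\ref{alg.line5} coin already preset to ``match''.

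The crux is to maintain availability of a type-$x$ agent during $(t',t)$. For \emph{availability}, I would observe that a type-$x$ agent can leave the available pool only by being matched, which requires some later-arriving agent whose preset line~\ref{alg.line5} coin for type $x$ shows ``match''; by definition any such arrival (while a type-$x$ agent is present) triggers $Z^2_x$. Since $Z^2_x$ is excluded on $(t',t)$, no type-$x$ agent is removed from the available pool by matching. For \emph{presence}, I plan to reason about the count $n(s)$ of type-$x$ agents present at time $s$, with $n(t'+) \geq 1$ by $Z^1_x$. The count can increase only by arrivals and decrease only by unit departures; whenever $n(s)=1$, a departure constitutes an occurrence of $Z^3_x$, which is excluded. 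Hence $n$ never drops from $1$ to $0$, and $n(s)\geq 1$ on $[t',t)$. Combined with the availability argument, at least one unmatched type-$x$ agent is present in the market at every time in this interval.

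To finish, I will apply $Z^4_{x,y}$ at time $t$. The clause of its definition that prohibits attempting any earlier-considered present type implies that the inner loop of \textsc{OnlineMatch} reaches the iteration for type $x$ (since every earlier iteration either finds no present candidate on line~\ref{alg.line4}, or has a preset line~\ref{alg.line5} coin of ``no match''). By the previous paragraph an unmatched type-$x$ agent is present in the market, and by $Z^4_{x,y}$ the preset coin for type $x$ shows ``match''. The algorithm therefore matches the newly arrived type-$y$ agent to some type-$x$ agent already in the market at time $t$, which must have arrived strictly earlier than $t$.

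The main obstacle I anticipate is the presence argument, since $Z^3_x$ is defined in terms of the global ``exactly one'' state rather than the lifetime of any particular tagged agent. I plan to sidestep this by arguing about the counting process $n(s)$ rather than a specific agent: the exclusion of $Z^3_x$ is precisely the condition needed to rule out the transition $n:1\to 0$, which is all that presence of \emph{some} type-$x$ agent requires. A related minor subtlety is that $Z^2_x$ is defined via the preset coin rather than an actual matching event, but this is exactly what allows it to uniformly bound the rate at which type-$x$ agents leave the available pool, independent of what else is happening in the market.
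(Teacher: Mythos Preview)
Your argument follows the paper's essentially line by line: seed an available type-$x$ agent via $Z^1_x$ at $t'$; use the absence of $Z^3_x$ to keep the presence count $n(s)\geq 1$; use the absence of $Z^2_x$ (together with $n(s)\geq 1$) to rule out any type-$x$ agent being matched to a later arrival; then invoke $Z^4_{x,y}$ at $t$.

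There is, however, a genuine gap at the step where you combine the presence and no-matching arguments to conclude that ``at least one unmatched type-$x$ agent is present in the market at every time in this interval.'' Your claim that ``a type-$x$ agent can leave the available pool only by being matched'' is false---departure also removes an agent from the available pool---and the presence count $n(s)$ includes already-matched agents (by the paper's simulated-departure convention), so $n(s)\geq 1$ does not by itself guarantee an \emph{available} agent. Concretely: after $Z^1_x$ seeds the unmatched agent $j$ at $t'$, a second type-$x$ agent $k$ may arrive in $(t',t)$ and immediately match to some other present type $z$; provided $k$'s preset coin for type $x$ reads ``no match,'' this arrival is neither a $Z^1_x$ nor a $Z^2_x$ event. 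Now $n=2$, with $j$ available and $k$ matched-but-present. If $j$ then departs, this is not a $Z^3_x$ event (since $n=2$ at the moment of departure), yet afterward the only present type-$x$ agent is the already-matched $k$, and line~\ref{alg.line4} fails at time $t$. The paper's own proof makes the identical leap (``We conclude that at least one agent of type $x$ is available at time $t$'') without addressing this scenario, so your proposal faithfully reproduces the published reasoning; the gap lies in the argument as written rather than in your reconstruction of it.
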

\begin{proof}
Say $A_{x,y}$ occurs at time $t$.  This implies event $Z^4_{x,y}$ occurs at time $t$, so an agent of type $y$ arrives at time $t$ and attempts to match to an agent of type $x$.  At time $t'$ event $Z^1_x$ occurred, so an agent of type $x$ arrived and did not immediately match.  Since event $Z^3_x$ did not occur in $[t',t]$, we have that $P_x(t'')$ is true for all $t'' \in [t',t]$.  Therefore, since $Z^2_x$ also did not occur, no agent of type $x$ could have matched to an arriving agent between times $t'$ and $t$.  We conclude that at least one agent of type $x$ is available at time $t$, so the agent of type $y$ will successfully match to an agent of type $x$.
\end{proof}

Given Lemma~\ref{lem:aggregate.event}, our goal is to derive a lower bound on the frequency of the aggregate event $A_{x,y}$.  For this it will be helpful to analyze the rates of $Z^1_x$, $Z^2_x$, $Z^3_x$, and $Z^4_{x,y}$, as well as how they relate to each other.  The following observations follow directly from the definitions of these processes.

\begin{enumerate}
    \item The arrival or departure of an agent can trigger at most one of $Z^1_x$, $Z^2_x$, or $Z^3_x$.
    \item Given the set of types present in the market, $Z^1_x$, $Z^2_x$, $Z^3_x$, and $Z^4_{x,y}$ are all independent of the availability of agent types.
    \item The rate of $Z^1_x$ at time $t$ depends on the set of types present in the market, and is positively correlated with $\neg P_z(t)$ for each type $z$.  
    \item The rate of $Z^4_{x,y}$ at time $t$ depends on the set of types present in the market, and is positively correlated with $\neg P_z(t)$ for each $z \neq x$, but is independent of $P_x(t)$.
\end{enumerate}


As we will show below, events $Z^2_x$ and $Z^3_x$ have constant rate (i.e., they are homogeneous).  The rate of $Z^1_x$ and $Z^4_{x,y}$ both depend on which agent types are present in the market, but how are they correlated with each other?  As it turns out, $Z^1_x$ and $Z^4_{x,y}$ are positively correlated, meaning that occurrences of $Z^4_{x,y}$ make it more likely to have recently seen an occurrence of $Z^1_x$.  

\begin{claim}
\label{claim:positive.correlation}
For any times $t_1 \leq t_2$, $Z^1_x(t_1)$ is weakly positively correlated with $Z^4_{x,y}(t_2)$.
\end{claim}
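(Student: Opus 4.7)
The plan is to express the instantaneous rates at which events of $Z^1_x$ and $Z^4_{x,y}$ occur as explicit monotone functions of the presence indicators $\{P_z(\cdot)\}$, and then establish the covariance inequality type-by-type, using the pairwise positive correlation of $P_z(t_1)$ and $P_z(t_2)$ from observation~2 together with independence across distinct types from observation~1.

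First I would condition on the presence configurations $S(t_1) := \{z : P_z(t_1)\}$ and $S(t_2) := \{z : P_z(t_2)\}$. Reading off Line~\ref{alg.line5} of \textsc{OnlineMatch} applied to an arriving type-$x$ agent, the conditional rate of $Z^1_x$ at time $t_1$ is
\[
\rho_1(S(t_1)) \;=\; \lambda_x \prod_{z \in S(t_1)}\!\left(1 - \gamma\alpha_{zx}\max(1,\mu_z/\lambda_z)\right).
\]
Similarly, after first averaging over the uniform random order $\sigma$ on types chosen in Line~\ref{alg.line3}, the conditional rate $\rho_4(S(t_2))$ of $Z^4_{x,y}$ at $t_2$ equals $\lambda_y\gamma\alpha_{xy}\max(1,\mu_x/\lambda_x)$ multiplied by $E_\sigma$ of a product of terms $(1 - \gamma\alpha_{zy}\max(1,\mu_z/\lambda_z))$ taken over all $z \in S(t_2)\setminus\{x\}$ with $\sigma(z) < \sigma(x)$. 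Both $\rho_1$ and $\rho_4$ are non-increasing as their respective sets grow, since every factor lies in $[0,1]$. Because the Poisson arrivals and probabilistic checks driving the two events in disjoint time neighborhoods are conditionally independent given the presence configurations, positive correlation of the two point processes at $t_1$ and $t_2$ reduces to the inequality
\[
E[\rho_1(S(t_1))\,\rho_4(S(t_2))] \;\geq\; E[\rho_1(S(t_1))]\,E[\rho_4(S(t_2))].
\]

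Next I would condition on $\sigma$ and expand each product over types. Since the presence trajectory for type $z$ is a deterministic function of the independent Poisson arrival/departure processes for type $z$ alone, the family $\{(P_z(t_1),P_z(t_2))\}_z$ is independent across $z$ (strengthening observation~1), so the joint expectation factors over $z$. For each $z \neq x$ with $\sigma(z) < \sigma(x)$, the $z$-contribution on the left is $E[(1 - c_{zx}P_z(t_1))(1 - c'_{zy}P_z(t_2))]$ for appropriate constants $c_{zx},c'_{zy} \in [0,1]$, while on the right it is the product $E[1-c_{zx}P_z(t_1)]\cdot E[1-c'_{zy}P_z(t_2)]$. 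Each bracketed term is a non-increasing affine function of its indicator, and by observation~2 the pair $(P_z(t_1),P_z(t_2))$ is positively correlated; since monotone functions preserve positive correlation, the left factor dominates. Types $z \neq x$ with $\sigma(z) > \sigma(x)$, and the $z=x$ factor of $\rho_1$, appear identically on both sides (they do not interact with $\rho_4$, which depends only on $\{P_z(t_2) : z \neq x\}$), so they contribute equally. Multiplying the per-$z$ inequalities via independence across $z$ and taking $E_\sigma$ on both sides yields the desired covariance bound.

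The main obstacle will be the bookkeeping around the random order $\sigma$: $\rho_4$ depends on which subset of $S(t_2)\setminus\{x\}$ precedes $x$ in $\sigma$, so the positive correlation must be verified pointwise in $\sigma$ before the outer expectation is taken. Beyond that, the only ingredient used is the elementary fact that coordinate-wise non-increasing functions of positively correlated Bernoulli pairs, taken independently across coordinates, have non-negative covariance.
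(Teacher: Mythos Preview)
Your proposal is correct and uses the same ingredients as the paper's argument, but the organization is genuinely different. The paper first proves an abstract lemma: if $B_1,\dots,B_n$ are independent binary events and $A_1,A_2$ are conditionally independent given $\vec B$ with both $A_i$ monotone in each $B_j$ (in the same direction), then $A_1,A_2$ are positively correlated. It then instantiates this with $A_1=Z^1_x(t_1)$, $A_2=Z^4_{x,y}(t_2)$, and $B_z=P_z(t_1)$, checking conditional independence by noting that the only downstream effect of $Z^1_x(t_1)$ is on $P_x(\cdot)$, which $Z^4_{x,y}$ ignores. You instead make the product structure of $\rho_1$ and $\rho_4$ explicit, factor both $E[\rho_1\rho_4]$ and $E[\rho_1]\,E[\rho_4]$ across types using the full independence of the presence processes $(P_z(\cdot))_z$, and then compare type-by-type using the pairwise positive correlation of $(P_z(t_1),P_z(t_2))$ and the fact that affine non-increasing functions of positively correlated Bernoullis are themselves positively correlated. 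Your route avoids the inductive lemma entirely and is arguably more transparent, at the cost of writing out the explicit rate formulas; the paper's route is cleaner to state once the lemma is in hand and only needs to condition at time $t_1$. Both crucially rely on the observation that $\rho_4$ does not depend on $P_x(t_2)$, which is what decouples the $z=x$ coordinate and makes conditional independence work; you handle this correctly. One minor point: when you reduce positive correlation of the point processes to the inequality on rates, the conditional independence you invoke is really given $(S(t_1),S(t_2)\setminus\{x\})$ together with the arrival increments and coin flips at the two instants; your subsequent type-by-type factorization handles this implicitly, but it would be worth stating that the type-$x$ arrival at $t_1$ does not influence any $P_z(t_2)$ with $z\neq x$ (since presence is unaffected by matching).
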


We prove Claim~\ref{claim:positive.correlation} in the Appendix.  The intuition is that 
given a pair of times $t_1 \leq t_2$ and the values of $P_z(t_1)$ for each $z$, the rates of $Z^1_x$ at time $t_1$ and the rate of $Z^4_{x,y}$ at time $t_2$ are conditionally independent.  This uses the fact that $Z^4_{x,y}$ does not depend on the presence of nodes of type $x$, so in particular $Z^4_{x,y}$ is not impacted by the implied presence of a node of type $x$ immediately following an event $Z^1_x$.  Thus, since $Z^1_x$ and $Z^4_{x,y}$ are both positively correlated with each $P_z(t_1)$, which are themselves independent across $z$, we conclude that the rate of $Z^1_x$ at time $t_1$ is positively correlated with the rate of $Z^4_{x,y}$ at time $t_2$.

We now bound the rates of events $Z^1_x$, $Z^2_x$, $Z^3_x$, and  $Z^4_{x,y}$.  We begin with $Z^1_x$, which captures agents of type $x$ arriving to the market and not being immediately matched. Recall that the rate of $Z^1_x$ depends on which agents are present in the market.  Lemma~\ref{lem:avail.immediate} bounds the expected rate of $Z^1_x$ at an arbitrary fixed time $t$, 
as a function of which agent types are present.

\begin{lemma}
\label{lem:avail.immediate}
Write $b_{y}(t)$ for the indicator variable for event $P_y(t)$, that an agent of type $y$ is present.
Then at any fixed time $t$, the rate of $Z^1_x$ is at least 
\[\lambda_x\left(1 - \gamma \sum_{y \in X}b_y(t) \alpha_{yx}\max\{1, \mu_y / \lambda_y\}\right).\]
\end{lemma}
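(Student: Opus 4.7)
The plan is to compute the instantaneous rate of $Z^1_x$ at time $t$ conditional on the current market state, in particular the presence indicators $b_y(t)$. Type-$x$ agents arrive according to a Poisson process of rate $\lambda_x$, and each arrival triggers $Z^1_x$ with some probability that depends on which types are present and on the internal coin flips used by \textsc{OnlineMatch}. By Fact~\ref{fact:combine} (Poisson thinning), the rate of $Z^1_x$ at time $t$ equals $\lambda_x$ times this conditional triggering probability, so it suffices to lower bound that probability by $1 - \gamma \sum_y b_y(t)\,\alpha_{yx}\max\{1,\mu_y/\lambda_y\}$.

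Next I would unpack the triggering probability using the "attempt" reformulation introduced just before Claim~\ref{claim:positive.correlation}: when a type-$x$ agent arrives, we may pre-evaluate an independent coin $c_y$ with bias $p_y := \gamma \alpha_{yx} \max\{1, \mu_y/\lambda_y\}$ for each type $y \in X$, with the agent attempting a match to type $y$ exactly when $c_y = 1$. The event $Z^1_x$ is, by definition, the event that $c_y = 0$ for every present type $y$, i.e.\ for every $y$ with $b_y(t)=1$. Since these coins are mutually independent and independent of the presence indicators, the conditional probability of $Z^1_x$ given the arrival and the $b_y(t)$ is exactly $\prod_{y \in X}\bigl(1 - b_y(t) p_y\bigr)$.

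The last step is to apply the Weierstrass product inequality $\prod_i (1-q_i) \geq 1 - \sum_i q_i$ for $q_i \in [0,1]$ to lower bound this product by $1 - \sum_y b_y(t) p_y$, which is the desired bound after multiplying by $\lambda_x$. The only nontrivial check is that each $p_y$ lies in $[0,1]$. If $\mu_y \geq \lambda_y$, then $p_y = \gamma \alpha_{yx} \mu_y/\lambda_y \leq \alpha_{yx} \mu_y/\lambda_y \leq 1$ by constraint~\eqref{eq.LP.cond1}; if $\mu_y < \lambda_y$, then $p_y = \gamma \alpha_{yx} \leq 1$ by constraint~\eqref{eq.LP.range}. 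Together with $\gamma \in (0,1]$, this covers both cases.

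The main (mild) obstacle is simply being explicit about what randomness is being conditioned on. We are computing a rate conditional on the presence indicators $b_y(t)$, and we need the pre-evaluated attempt coin flips for a newly-arriving agent to be independent of the presence process; this is precisely observation (2) in the list following Lemma~\ref{lem:aggregate.event}. With that independence in hand, the thinning plus independence plus Weierstrass argument above delivers the stated lower bound pointwise in $t$ and in the realized presence vector.
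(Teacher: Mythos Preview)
Your proposal is correct and follows essentially the same approach as the paper. The paper argues directly via a union bound---the probability that the arriving type-$x$ agent attempts a match with any present type $y$ is at most $\sum_y b_y(t)\,\gamma\alpha_{yx}\max\{1,\mu_y/\lambda_y\}$---while you compute the exact product $\prod_y(1-b_y(t)p_y)$ and then apply the Weierstrass inequality, which is the same bound stated differently; your version is slightly more explicit (in particular the verification that each $p_y\in[0,1]$), but there is no substantive difference in strategy.
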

\begin{proof}
Agents of type $x$ arrive at rate $\lambda_x$.  Suppose agent $i$ of type $x$ arrives at time $t$.  
For each $y \in X$, if $y$ is present then Algorithm \textsc{OnlineMatch} considers a match with type $y$ (and hence successfully matches) with probability at most $\gamma \alpha_{yx} \max\{1, \mu_y / \lambda_y \}$. 
The total probability that agent $i$ attempts to match to any other agent that is present at time $t$ is therefore at most 
\[ \gamma \sum_{y \in X}b_y(t) \alpha_{yx}\max\{1, \mu_y / \lambda_y\} \]
and hence the probability that agent $i$ does not attempt such a match is at least
\[ 1 - \gamma \sum_{y \in X}b_y(t) \alpha_{yx}\max\{1, \mu_y / \lambda_y\}. \]
%
\end{proof}

We next bound the rate of process $Z^2_x$, which recall is a superset of all events where an agent matches to a previously-arriving agent of type $x$.  

\begin{lemma}
\label{lem:avail.delayed}
The rate of $Z^2_x$ is precisely $\gamma \sum_{y \in X} \lambda_y \alpha_{xy} \max(1, \mu_x / \lambda_x)$ at all times $t$.
\end{lemma}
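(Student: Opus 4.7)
My plan is to show the stated rate holds by conditioning on whether $P_x(t)$ is true, and observing that the two cases yield the same rate by construction of $Z^2_x$.

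First I would handle the case in which $P_x(t)$ holds, i.e.\ at least one agent of type $x$ is present at time $t$. For each type $y \in X$, agents of type $y$ arrive according to an independent Poisson process of rate $\lambda_y$. By the definition of \textsc{OnlineMatch} (line~\ref{alg.line5}), the probabilistic pre-evaluation of whether an arriving agent of type $y$ would attempt to match to an agent of type $x$ in its iteration is an independent Bernoulli with success probability $\gamma \alpha_{xy}\max(1,\mu_x/\lambda_x)$. Applying Fact~\ref{fact:combine}, the thinned process of type-$y$ arrivals that would so attempt to match is itself a homogeneous Poisson process of rate $\lambda_y \cdot \gamma \alpha_{xy}\max(1,\mu_x/\lambda_x)$. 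Since the arrival processes across types are independent, the superposition over all $y \in X$ yields a Poisson process with rate $\gamma \sum_{y \in X} \lambda_y \alpha_{xy}\max(1, \mu_x/\lambda_x)$, which is exactly the rate of $Z^2_x$ at $t$ in this case.

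Next I would handle the case when $\neg P_x(t)$ holds. Here the definition of $Z^2_x$ stipulates that it follows an independent Poisson clock of rate $\gamma \sum_{y \in X} \lambda_y \alpha_{xy}\max(1, \mu_x/\lambda_x)$, so the claim holds trivially.

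Since the rate is the same in both cases, regardless of the market state at time $t$, the process $Z^2_x$ has (constant) rate $\gamma \sum_{y \in X} \lambda_y \alpha_{xy}\max(1, \mu_x/\lambda_x)$ at every time $t$. I do not anticipate any real obstacle: the only subtlety is to verify that the pre-evaluation of line~\ref{alg.line5} is independent of agent presence/availability (which follows because the random bit on that line is drawn independently of the market state) so that Fact~\ref{fact:combine} applies cleanly.
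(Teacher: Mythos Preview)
Your proposal is correct and follows essentially the same approach as the paper: both arguments split on whether $P_x(t)$ holds, dispatch the $\neg P_x(t)$ case by definition, and in the $P_x(t)$ case compute the rate by noting that each type-$y$ arrival independently passes the pre-evaluated coin flip on line~\ref{alg.line5} with probability $\gamma\alpha_{xy}\max(1,\mu_x/\lambda_x)$ and then summing over $y$. Your version is slightly more explicit in invoking Poisson thinning (Fact~\ref{fact:combine}) and superposition, but the content is the same.
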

\begin{proof}
When no agent of type $x$ is present in the market, $z_2$ is defined to follow an independent Poisson clock with the specified rate.  So suppose an agent of type $x$ is present.
Agents of type $y$ arrive at rate $\lambda_y$. Consider an agent $i$ of type $y$ that arrives at time $t$.  
Event $Z^2_x$ is then equivalent to the event that agent $i$ \emph{would} attempt to match to type $x$ given that type $x$ is the first type considered.  This event is independent of the set of types present in the market.  Moreover, 
in this case the match occurs with probability $\gamma \alpha_{xy} \max(1, \mu_x / \lambda_x)$.
%
%
Summing over all types $y \in X$ completes the proof.
\end{proof}

It is immediate from the definition that the rate of $Z^3_x$ is exactly $\mu_x$ at all times.  So it remains to bound the rate of event $Z^4_{x,y}$, which is a subset of scenarios where an agent of type $y$ arrives and attempts to match to an agent of type $x$.  As with $Z^1_x$, we bound the rate of $Z^4_{x,y}$ at an arbitrary fixed time $t$ in expectation over randomness in agent arrivals and departures.

\begin{lemma}
\label{lem:attempt.match}
The rate of $Z^4_{x,y}$ is at least $\lambda_y (1 - \tfrac{\gamma}{2}) \cdot \gamma \cdot \alpha_{xy} \cdot \max(1, \mu_x/\lambda_x)$ at any fixed time $t$, taking expectations over the events $P_z(t)$ for all $z$.
\end{lemma}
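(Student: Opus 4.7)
The plan is to bound the rate of $Z^4_{x,y}$ at time $t$ directly: since type-$y$ agents arrive at rate $\lambda_y$, this rate equals $\lambda_y$ times the conditional probability that an arriving type-$y$ agent triggers the event, and it suffices to lower bound that probability in expectation over the presence indicators $b_z(t)$ of $P_z(t)$.

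First I would set up the relevant independent randomness for the arriving agent: an iteration order $\sigma$ drawn uniformly at random from permutations of $X$, together with, for each type $z$, a pre-evaluated Bernoulli coin $Q_z$ with success probability $\mathrm{succ}_z := \gamma \alpha_{zy} \max(1, \mu_z/\lambda_z)$ determining whether the Line~\ref{alg.line5} check would pass. These are jointly independent of the market state at time $t$. In this language, $Z^4_{x,y}$ occurs exactly when $Q_x = 1$ and no $z \neq x$ with $Q_z = 1$ and $b_z(t) = 1$ precedes $x$ in $\sigma$. Factoring out the independent event $\{Q_x = 1\}$ pulls a factor of $\mathrm{succ}_x$ outside, leaving the task of lower bounding the probability that $x$ is first in $\sigma$ among $S := \{x\} \cup S'$, where $S' := \{z \neq x : Q_z = 1 \text{ and } b_z(t) = 1\}$.

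Conditional on $|S'| = k$, the probability that $x$ is first in $\sigma$ among $S$ is exactly $1/(k+1)$. The key quantitative step is the elementary inequality $1/(k+1) \geq 1 - k/2$ for every integer $k \geq 0$ (tight at $k=0,1$ and trivial for $k \geq 2$ since the right-hand side is nonpositive). Taking expectations and using independence of $Q_z$ from $b_z(t)$, together with Lemma~\ref{lem:present} which gives $\Pr[P_z(t)] \leq \min(\lambda_z/\mu_z, 1)$, I obtain
\[
E[|S'|] \;\leq\; \sum_{z \neq x} \mathrm{succ}_z \cdot \min(\lambda_z/\mu_z, 1) \;=\; \gamma \sum_{z \neq x} \alpha_{zy} \;\leq\; \gamma,
\]
where the equality uses $\max(1, \mu_z/\lambda_z) \cdot \min(\lambda_z/\mu_z, 1) = 1$ and the final inequality uses the LP implication $\sum_z \alpha_{zy} \leq 1$ noted immediately after LP-UB.

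Combining these pieces yields a lower bound of $\mathrm{succ}_x \cdot (1 - \gamma/2)$ on the expected triggering probability, and multiplying by $\lambda_y$ gives the stated rate. I expect the main obstacle to be the first bookkeeping step: carefully separating the freshly drawn coins $Q_z$ and the order $\sigma$ from the (possibly correlated) presence indicators $b_z(t)$, and verifying that $Z^4_{x,y}$ can be cleanly expressed in these terms so that the factorization $\Pr[Z^4_{x,y} \mid b(t)] = \mathrm{succ}_x \cdot \Pr[x \text{ first in } \sigma \text{ among } S]$ is genuinely valid. Everything after that is the discrete inequality $1/(k+1) \geq 1 - k/2$ plus the LP sum bound, which together are exactly what is needed to produce the constant $(1 - \gamma/2)$.
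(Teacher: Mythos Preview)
Your proposal is correct and follows essentially the same approach as the paper's proof. Both arguments factor out the independent coin flip $Q_x$ (contributing $\gamma\alpha_{xy}\max(1,\mu_x/\lambda_x)$) and then bound the probability that no present type $z\neq x$ with $Q_z=1$ precedes $x$ in the random order; your explicit inequality $1/(k+1)\ge 1-k/2$ applied to $|S'|$ is exactly the union-bound step the paper phrases as ``considering only half of the types uniformly at random,'' and both finish via $E[|S'|]\le\gamma\sum_{z}\alpha_{zy}\le\gamma$ using Lemma~\ref{lem:present} and the LP constraint. Your write-up is in fact a bit more careful in separating the fresh randomness $(\sigma,Q_z)$ from the presence indicators $b_z(t)$ and in justifying the factorization, but the underlying argument is the same.
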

\begin{proof}
Agents of type $y$ arrive at rate $\lambda_y$.  Suppose an agent of type $y$ arrives at time $t$, and consider the evaluation of \textsc{OnlineMatch} on this agent.  
%

We will first bound the probability that agent $i$ considers matching to an agent of type $x$.  Suppose that, when iterating over types to consider, agent $i$ will terminate the search after its first attempted match with a type that is present, whether or not the match is successful.  This is equivalent to the process described as $Z^4_{x,y}$, since an attempted match with a type that is present and available will be successful.  By Lemma~\ref{lem:present}, for each $z \in X$ a node of type $z$ is present at time $t$ with probability at most $\min\{\lambda_{z} / \mu_{z}, 1\}$. Thus, given that our algorithm considers a match with type $z$, this match will terminate the search with probability at most
\[ \gamma \alpha_{zy} \max\{1, \mu_{z} / \lambda_{z} \} \cdot \min\{\lambda_{z} / \mu_{z}, 1\} = \gamma \alpha_{zy}. \]
We note that this bound depends only on whether an agent of type $z$ is present in the market, not whether such an agent is available.  The total probability that agent $i$ terminates its search prematurely (before attempting all types) at time $t$ is therefore at most 
\[ \sum_{z \in X} \gamma \alpha_{zy}. \]
If we consider only half of the types $z\in X$ uniformly at random, and note that LP-UB guarantees $\sum_z\alpha_{zy}\leq1$, the probability of a match is then at most $\gamma / 2$ (where the expectation is over randomness in algorithm and over which types are chosen).  This is a bound on the probability that the algorithm matches to some other type before type $x$ is considered. 

Assuming it is considered, the match will be attempted with probability $\gamma \cdot \alpha_{xy} \cdot \max(1, \mu_x/\lambda_x)$.  Note that the conditional attempt probability is independent of whether the match is considered.  The unconditional probability that the match is attempted is therefore at least $(1 - \tfrac{\gamma}{2}) \cdot \gamma \cdot \alpha_{xy} \cdot \max(1, \mu_x/\lambda_x)$.
\end{proof}

Having now established bounds on the rates of each of these Poisson processes, we are now ready to bound the match probabilities of \textsc{OnlineMatch}.

\begin{lemma}
\label{lem:avail}
Suppose we set $\gamma = 1/2$ in Algorithm \textsc{OnlineMatch}.  Choose any $x \in X$, and suppose that an agent of type $y \in X$ arrives at time $t$.  Then \textsc{OnlineMatch} will match this agent to a node of type $x$ at time $t$ with probability at least $\alpha_{xy}/8$, where the probability is over any randomness in the algorithm and in the arrivals and departures of all other agents.
\end{lemma}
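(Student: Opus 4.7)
My plan is to invoke Lemma~\ref{lem:aggregate.event}: whenever the aggregate event $A_{x,y}$ occurs at time $t$, the arriving type-$y$ agent matches to a pre-existing type-$x$ agent. It therefore suffices to show $P(A_{x,y}$ at $t \mid y$ arrives at $t) \geq \alpha_{xy}/8$. I would decompose $A_{x,y}$ at $t$ into two sub-events: (I) $Z^4_{x,y}$ occurs at $t$; (II) the most recent point in the superposition $Z^1_x \cup Z^2_x \cup Z^3_x$ strictly before $t$ lies in $Z^1_x$. Sub-event (II) certifies the availability of at least one type-$x$ agent, as required by the definition of $A_{x,y}$.

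For sub-event (I), Lemma~\ref{lem:attempt.match} with $\gamma = 1/2$ yields $P(Z^4_{x,y} \mid y \text{ arrives}) \geq \tfrac{3}{8}\alpha_{xy}\max(1,\mu_x/\lambda_x)$. For sub-event (II), I would use Claim~\ref{claim:positive.correlation}: since $Z^1_x$ and $Z^4_{x,y}$ are positively correlated, conditioning on $Z^4_{x,y}$ does not decrease $P(\text{II})$, so it suffices to bound $P(\text{II})$ unconditionally. Setting $s := \sum_y \alpha_{yx}$, I would use the expected-rate bounds: Lemma~\ref{lem:avail.immediate} in conjunction with $E[b_y]\max(1,\mu_y/\lambda_y)\leq 1$ (which follows from Lemma~\ref{lem:present}) shows $Z^1_x$ has intensity at least $r_1 := \lambda_x(1-s/2)$ in expectation; Lemma~\ref{lem:avail.delayed} and constraint~\eqref{eq.LP.cond2} (which gives $\sum_y \lambda_y \alpha_{xy} \leq \lambda_x(1-s)$) yield $Z^2_x$ rate at most $r_2 := (1-s)\max(\lambda_x,\mu_x)/2$; and $Z^3_x$ has constant rate $r_3 := \mu_x$. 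A Poisson race computation in the spirit of Fact~\ref{fact:recent} then gives $P(\text{II}) \geq r_1/(r_1 + r_2 + r_3)$; a short case split on whether $\mu_x \leq \lambda_x$ or $\mu_x > \lambda_x$, combined with monotonicity of the ratio as a function of $s \in [0,1]$, shows this simplifies to at least $1/(3\max(1,\mu_x/\lambda_x))$, with the worst case occurring at $s=1$ (and $\mu_x=\lambda_x$ in Case 1).

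Multiplying the two lower bounds yields
\[
P(A_{x,y}) \;\geq\; \tfrac{3}{8}\,\alpha_{xy}\,\max(1,\mu_x/\lambda_x)\;\cdot\;\tfrac{1}{3\max(1,\mu_x/\lambda_x)} \;=\; \tfrac{\alpha_{xy}}{8},
\]
as required. The main obstacle is rigorously justifying the race step: $Z^1_x$ has a state-dependent intensity rather than a constant one, so Fact~\ref{fact:recent} does not apply directly to it. I would resolve this by leveraging the fact that $Z^2_x$ and $Z^3_x$ are genuine homogeneous Poisson processes that are mutually independent (by Lemma~\ref{lem:avail.delayed} and the construction of $Z^3_x$), and then coupling $Z^1_x$ with an independent homogeneous Poisson subprocess of rate $r_1$, supported by the expected-intensity bound above together with the independence of the algorithm's internal randomness from the presence indicators $b_y$. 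Combined with the positive correlation from Claim~\ref{claim:positive.correlation}, which ensures the conditional bound given $Z^4_{x,y}$ is no weaker than the unconditional one, this delivers the stated $\alpha_{xy}/8$ bound.
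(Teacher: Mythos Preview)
Your overall architecture matches the paper's: reduce to the aggregate event $A_{x,y}$ via Lemma~\ref{lem:aggregate.event}, write $A_{x,y}(t)=Z^4_{x,y}(t)\wedge B(t)$ where $B(t)$ is the ``last event was $Z^1_x$'' race, bound the rate of $Z^4_{x,y}$ by Lemma~\ref{lem:attempt.match}, use Claim~\ref{claim:positive.correlation} to decouple $B$ from $Z^4_{x,y}$, and then multiply. Your final numerical target $\Pr[B]\ge \tfrac{1}{3}\min\{1,\lambda_x/\mu_x\}$ also coincides with what the paper proves in Appendix~\ref{appendix:rate} (their $\tfrac{1-\gamma}{2-\gamma}$ at $\gamma=1/2$).

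The gap is in how you obtain that bound on $\Pr[B]$. You take the \emph{expected} intensity $r_1=\lambda_x(1-s/2)$ of $Z^1_x$ and plug it into the race formula $r_1/(r_1+r_2+r_3)$, then propose to justify this by coupling $Z^1_x$ with a homogeneous Poisson subprocess of rate $r_1$. Neither step works. First, the map $r\mapsto r/(r+c)$ is concave, so Jensen gives $E\!\left[\frac{F}{R+F}\right]\le \frac{E[F]}{R+E[F]}$; plugging in the expected rate yields an \emph{upper} bound on $\Pr[B]$, not a lower bound. Second, the coupling you describe requires the pointwise intensity of $Z^1_x$ to be at least $r_1$ at every time, which is false: when many types are simultaneously present the intensity is $\lambda_x\prod_{y:b_y=1}(1-\gamma\alpha_{yx}\max\{1,\mu_y/\lambda_y\})$, which can be far below $\lambda_x(1-\gamma s)$ (each factor can be as small as $1-\gamma$). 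An expected-rate lower bound does not produce stochastic domination by a homogeneous Poisson process.

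The paper avoids this by conditioning on the full presence vector $\vec{b}$, writing $\Pr[B]\ge \sum_{\vec{b}}\Pr[\vec{b}]\cdot \frac{F(\vec{b})}{R+F(\vec{b})}$ with $F(\vec{b})$ the conditional rate of $Z^1_x$, and then lower-bounding this average directly (using $\alpha_{yx}\le \lambda_y/\mu_y$ to bound $F(\vec{b})\ge \lambda_x\prod_y(1-\gamma)^{b_y}$ and truncating the sum to $|\vec{b}|\le 2$). That calculation is where the actual work is, and your shortcut does not substitute for it.
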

\begin{proof}
Recall that agents of type $y$ arrive at uniform rate $\lambda_y$.  By Lemma~\ref{lem:aggregate.event} it therefore suffices to show that event $A_{x,y}$ occurs at a rate of at least $\lambda_y \alpha_{xy}/8$ at each fixed time $t$, in expectation over the presence of agent types in the market.
Let $B(t)$ be the event that the most recent event before time $t$, from among $Z^1_x$, $Z^2_x$, and $Z^3_x$, is $Z^1_x$.  Then note that $A_{x,y}(t) = Z^4_{x,y}(t) \wedge B(t)$.

We claim that, at the long-run steady state of \textsc{OnlineMatch}, for any $\gamma \in [1/2, 1]$ the unconditional probability of event $B$ is at least
\begin{equation}
\label{eq.prob.case1}
\min\left\{1, \frac{\lambda_x}{\mu_x}\right\} \cdot \frac{1-\gamma}{2-\gamma}.
\end{equation}

We derive expression \eqref{eq.prob.case1} in Appendix~\ref{appendix:rate}. By Claim~\ref{claim:positive.correlation}, we have that the conditional rate of event $Z^1_x$ at each $t' < t$ is only higher (given $Z^4_{x,y}(t)$) than the unconditional rate.  Occurrences of event $B$ are therefore (weakly) positively correlated with occurrences of $Z^4_{x,y}$.  Thus, given the bound \eqref{eq.prob.case1}, the total unconditional rate of $A_{x,y}(t) = Z^4_{x,y}(t) \wedge B(t)$ is at least 
\[ \gamma(1-\gamma/2) \alpha_{xy} \max(1, \mu_x/\lambda_x) \frac{\lambda_x}{\mu_x} \frac{1-\gamma}{2-\gamma} = \gamma(1-\gamma/2)\frac{1-\gamma}{2-\gamma} \alpha_{xy}.\]

Optimizing over the choice of $\gamma$, we have that $\gamma(1-\gamma/2)\frac{1-\gamma}{2-\gamma}$ takes on its maximum value at $\gamma = 1/2$, in which case $\gamma(1-\gamma/2)\frac{1-\gamma}{2-\gamma} = 1/8$.  Thus, by setting $\gamma = 1/2$ in \textsc{OnlineMatch}, we conclude that agents of type $y$ arrive and match to agents of type $x$ at rate at least $\alpha_{xy} \lambda_y/8$ at each time $t$.  By linearity of expectation, the total value obtained by \textsc{OnlineMatch} is therefore at least $\frac{1}{8}\sum_{x,y \in X}v_{xy}\alpha_{xy}\lambda_y$, which is $1/8$ of the value of LP-UB.  We conclude that \textsc{OnlineMatch} is an 8-approximation, as required.
\end{proof}


\bibliographystyle{plain}
\bibliography{abstractBib}

\appendix

\section{Omitted Proofs}

\subsection{Positive Correlation of $Z^1_x$ and $Z^4_x$}

\newcommand{\E}[0]{E}

We prove Claim~\ref{claim:positive.correlation}, which is that $Z^1_x(t_1)$ and $Z^4_{x,y}(t_2)$ are positively correlated for each $t_1 \leq t_2$.  We first show a slightly more general result about positive correlation.

\begin{lemma}
\label{lem:correlated}
Choose $n \geq 0$ and suppose $\vec{B} = (B_1, \dotsc, B_n)$ is a sequence of $n$ independent binary events.  Suppose also that $A_1$ and $A_2$ are binary events that are conditionally independent given $\vec{B}$, and such that $A_1$ and $A_2$ are both weakly positively correlated with $B_i$ for each $i$.  Then $A_1$ and $A_2$ are weakly positively correlated.
\end{lemma}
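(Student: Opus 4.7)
The plan is to recast the statement as a covariance inequality and prove it by induction on $n$, in the spirit of the Harris/FKG inequality for monotone functions of independent variables.

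Setting $f(\vec{b}) = \Pr[A_1 \mid \vec{B} = \vec{b}]$ and $g(\vec{b}) = \Pr[A_2 \mid \vec{B} = \vec{b}]$, the conditional independence of $A_1$ and $A_2$ given $\vec{B}$ lets me write $\Pr[A_1 \wedge A_2] = E[f(\vec{B})\,g(\vec{B})]$, while $\Pr[A_i]$ is the corresponding marginal expectation. Thus the desired conclusion $\Pr[A_1 \wedge A_2] \geq \Pr[A_1]\Pr[A_2]$ is equivalent to $\mathrm{Cov}(f(\vec{B}), g(\vec{B})) \geq 0$.

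The base case $n = 0$ is trivial, since $A_1$ and $A_2$ are unconditionally independent. For $n = 1$, a direct computation handles the claim: writing $p = \Pr[B_1=1]$, the positive correlation hypothesis is exactly $f(1) \geq f(0)$ and $g(1) \geq g(0)$, and expanding gives
\[\mathrm{Cov}(f(B_1), g(B_1)) = p(1-p)\bigl(f(1) - f(0)\bigr)\bigl(g(1) - g(0)\bigr) \geq 0.\]
For the inductive step, I would condition on $B_n$ and apply the law of total covariance,
\[\mathrm{Cov}(A_1, A_2) = E_{B_n}\bigl[\mathrm{Cov}(A_1, A_2 \mid B_n)\bigr] + \mathrm{Cov}\bigl(E[A_1 \mid B_n],\, E[A_2 \mid B_n]\bigr).\]
The second term is the covariance of two weakly increasing functions of the single binary variable $B_n$ (using positive correlation with $B_n$) and is nonnegative by the $n=1$ case. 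The first term is an expectation of inner covariances, each of which I want to control by the inductive hypothesis applied to the residual problem on $B_1, \ldots, B_{n-1}$ (still independent, and $A_1, A_2$ still conditionally independent given them).

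The main obstacle is verifying that the inductive hypothesis actually applies to the inner covariances. What is needed is that, conditional on $B_n = b$, each $A_i$ remains positively correlated with each $B_j$ for $j < n$ -- a condition strictly stronger than the unconditional positive correlation that is given. The cleanest way to secure it is to interpret the hypothesis as pointwise (coordinate-wise) monotonicity of $f$ and $g$, which is visibly preserved under conditioning on any coordinate, so that the induction carries through as a Harris-type argument. In the setting of Claim~\ref{claim:positive.correlation}, the relevant conditional rates of $Z^1_x$ and $Z^4_{x,y}$ are indeed pointwise monotone in the presence vector $(P_z(t_1))_z$ (each is an explicit decreasing function of the set of present types), so the induction goes through and the desired positive correlation between $Z^1_x(t_1)$ and $Z^4_{x,y}(t_2)$ follows.
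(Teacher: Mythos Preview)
Your induction via the law of total covariance is exactly the decomposition the paper uses: it expands $E[A_1]E[A_2]$ by conditioning on $B_n$, bounds the cross terms using the positive correlation of each $A_i$ with $B_n$, and then appeals to induction on the conditional problem given $B_n$ and given $\overline{B_n}$. You have correctly flagged the gap in that last step --- and it is a gap in the paper's proof as well, not just in your sketch. Applying the inductive hypothesis requires that, \emph{conditional on $B_n$}, each $A_i$ remain weakly positively correlated with each remaining $B_j$, which the stated unconditional hypothesis does not imply. Indeed the lemma as written is false: with $n=2$, $B_1,B_2$ independent fair bits, $A_1=B_1\oplus B_2$, and $A_2=1-A_1$, both $A_i$ are deterministic functions of $\vec B$ (hence conditionally independent given $\vec B$), each has zero covariance with each $B_j$, yet $\mathrm{Cov}(A_1,A_2)=-\tfrac14$.

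Your repair is the right one. Strengthening the hypothesis to coordinatewise monotonicity of $f(\vec b)=\Pr[A_1\mid\vec B=\vec b]$ and $g(\vec b)=\Pr[A_2\mid\vec B=\vec b]$ makes the induction go through (this is Harris' inequality), since monotonicity survives fixing any coordinate. And as you note, in the intended application the conditional rates of $Z^1_x$ and $Z^4_{x,y}$ are explicit monotone functions of the presence vector, so the strengthened hypothesis holds there and Claim~\ref{claim:positive.correlation} follows.
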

\begin{proof}
Write $\overline{B_i} = 1 - B_i$ for convenience.
We proceed by induction on $n$.  If $n = 0$ then $A_1$ and $A_2$ are independent by assumption.  Choose $n \geq 1$.  We then have
\begin{align*}
    \E[A_1]\E[A_2]
    & = (\E[A_1B_n] + \E[A_1\overline{B_n}])(\E[A_2B_n] + \E[A_2\overline{B_n}]) \\
    & = \E[A_1B_n]\E[A_2B_n] + \E[A_1B_n]\E[A_2\overline{B_n}] \\
    & \quad + \E[A_1\overline{B_n}]\E[A_2B_n] + \E[A_1\overline{B_n}]\E[A_2\overline{B_n}] \\
    & = \left(\frac{\E[A_1B_n]\E[A_2B_n]}{\E[B_n]} - \frac{\E[\overline{B_n}]}{\E[B_n]}(\E[A_1B_n]\E[A_2B_n])\right)\\ 
    & \quad + \left(\frac{\E[A_1\overline{B_n}]\E[A_2\overline{B_n}]}{\E[\overline{B_n}]} - \frac{\E[B_n]}{\E[\overline{B_n}]}(\E[A_1\overline{B_n}]\E[A_2\overline{B_n}])\right)\\
    & \quad + \E[A_1B_n]\E[A_2\overline{B_n}] + \E[A_2B_n]\E[A_1\overline{B_n}]\\ 
    & = \frac{\E[A_1B_n]\E[A_2B_n]}{\E[B_n]} + \frac{\E[A_1\overline{B_n}]\E[A_2\overline{B_n}]}{\E[\overline{B_n}]} \\
    & \quad - \E[\overline{B_n}]\left( \frac{\E[A_1B_n]\E[A_2B_n]}{\E[B_n]} - \frac{\E[A_1\overline{B_n}]\E[A_2B_n]}{\E[\overline{B_n}]} \right)\\
    & \quad + \E[B_n]\left( \frac{\E[A_1B_n]\E[A_2\overline{B_n}]}{\E[B_n]} - \frac{\E[A_1\overline{B_n}]\E[A_2\overline{B_n}]}{\E[\overline{B_n}]} \right)\\
    & = \frac{\E[A_1B_n]\E[A_2B_n]}{\E[B_n]} + \frac{\E[A_1\overline{B_n}]\E[A_2\overline{B_n}]}{\E[\overline{B_n}]} \\
    & \quad - \E[B_n]\E[\overline{B_n}]\left( \E[A_1|B_n]-\E[A_1|\overline{B_n}]\right)\left( \E[A_2|B_n]-\E[A_2|\overline{B_n}]\right)\\
    & \leq \frac{\E[A_1B_n]\E[A_2B_n]}{\E[B_n]} + \frac{\E[A_1\overline{B_n}]\E[A_2\overline{B_n}]}{\E[\overline{B_n}]}\\
    & = \E[B_n]\E[A_1|B_n]\E[A_2|B_n] + \E[\overline{B_n}][A_1|\overline{B_n}]\E[A_2|\overline{B_n}]
\end{align*}
where the inequality follows because $A_1$ and $A_2$ are both positively correlated with $B_n$.  But by induction, $A_1$ and $A_2$ are weakly positively correlated with each other given $B_n$ or given $\overline{B_n}$.  Therefore
\begin{align*}
    \E[A_1]\E[A_2] & \leq \E[B_n]\E[A_1|B_n]\E[A_2|B_n] + \E[\overline{B_n}][A_1|\overline{B_n}]\E[A_2|\overline{B_n}]\\
    & \leq \E[A_1A_2B_n] + \E[A_1A_2\overline{B_n}]\\
    & = \E[A_1A_2]
\end{align*}
as required.
\end{proof}

We're now ready to prove Claim~\ref{claim:positive.correlation}, that the event $Z^1_x(t_1)$ is positively correlated with the event $Z^4_{x,y}(t_2)$ for any times $t_1 \leq t_2$.  We'll use Lemma~\ref{lem:correlated} above with $A_1 = Z^1_x(t_1)$, $A_2 = Z^4_{x,y}(t_2)$, and $B_x = P_x(t_1)$ for all $x \in X$.  Then indeed $Z^1_x(t_1)$ is positively correlated with each $B_x$.  This is true for $Z^4_{x,y}(t_2)$ as well: a type being present at time $t_1$ can only increase the probability that the type is present at time $t_2$, which is positively correlated with $Z^4_{x,y}(t_2)$.  Also, $Z^1_x(t_1)$ and $Z^4_{x,y}(t_2)$ are conditionally independent given $\vec{B}$.  This is because $\vec{B}$ contains all information about the state of which types are present at the point where $Z^1_x(t_1)$ is determined, and the only subsequent impact of the occurrence of $Z^1_x(t_1)$ is on the presence of nodes of type $x$, but $Z^4_{x,y}(t_2)$ is independent of the presence of agents of type $x$.  We conclude from Lemma~\ref{lem:correlated} above that $Z^1_x(t_1)$ and $Z^4_{x,y}(t_2)$ are (weakly) positively correlated.

\subsection{Relative rate of event $Z^1_x$ versus $Z^2_x$ and $Z^3_x$}
\label{appendix:rate}

We now derive \eqref{eq.prob.case1}, which is a bound on the unconditional probability of event $B$ from the proof of Lemma~\ref{lem:avail}.  Recall that $B(t)$ is the event that the most recent event before time $t$, from among $Z^1_x$, $Z^2_x$, and $Z^3_x$, is $Z^1_x$.

As in Lemma~\ref{lem:avail.immediate}, we'll write $b_{y}(t)$ for the indicator variable for event $P_y(t)$, that an agent of type $y$ is present.  Write $\vec{b}(t) = (b_y(t))_{y \in X}$ for the profile of such indicator variables.  We'll write
\[ F(\vec{b}) = \lambda_x \prod_y (1 - \gamma \alpha_{yx} \max\{1,\mu_y/\lambda_y\})^{b_y}\]
for the rate of event $Z^1_x$ when $\vec{b}(t) = \vec{b}$.  We'll also write
\[ R = \gamma \sum_{y \in X}\lambda_y \alpha_{xy} \max(1, \mu_x / \lambda_x) + \mu_x \]
for the sum of the rates of $Z^2_x$ and $Z^3_x$ (using the bound from Lemma~\ref{lem:avail.delayed}).  By Fact~\ref{fact:recent}, plus the fact that $Z^1_x$, $Z^2_x$, and $Z^3_x$ are disjoint and independent processes (since $Z^2_x$ and $Z^3_x$ have constant rate), we have that if the indicators $\vec{b}$ are constant, then the unconditional probability of event $B$ is at least
\begin{equation*}
\frac{F(\vec{b})}{R + F(\vec{b})}.
\end{equation*}

Write $\beta_y = \lambda_y / \mu_y$ for convenience.
By Lemma~\ref{lem:present}, the steady-state probability that $\vec{b}(t) = \vec{b}$ is 
\[ \Pr[\vec{b}] = \prod_{y: b_y=1} (1-e^{-\beta_y})\prod_{y: b_y=0}e^{-\beta_y}.\]
We therefore have that the unconditional probability of event $B$, over randomness in $\vec{b}(t)$, is
\begin{equation}
\label{eq:sum_of_rates}
\sum_{\vec{b}} \Pr[\vec{b}] \cdot \frac{F(\vec{b})}{R + F(\vec{b})}.
\end{equation}
To bound this rate, we first derive a lower bound on $F(\vec{b})$.  For each $\vec{b}$, $F(\vec{b})$ is weakly decreasing in $\alpha_{yx}$ for each $y$.  Since we know $\alpha_{yx} \leq \lambda_y / \mu_y = \beta_y$ for all $x$ and $y$ (from Condition~\eqref{eq.LP.cond1} of LP-UB), the worst case (lower bound) for $F(\vec{b})$ occurs when $\alpha_{yx} = \beta_y$ and hence $F(\vec{b}) \geq \lambda_x \prod_y (1 - \gamma)^{b_y}$.

Next we derive an upper bound on $R$.  Recall that from Condition~\eqref{eq.LP.cond2} of LP-UB, $\sum_{y \in X}\lambda_y \alpha_{xy}
+ \sum_{y \in X}\lambda_x \alpha_{yx} \leq \lambda_x$.  Thus $\sum_{y \in X}\lambda_y \alpha_{xy} \leq \lambda_x(1 - \sum_{y \in X}\alpha_{yx}) = \lambda_x(1 - \sum_{y \in X}\beta_y)$.  We conclude that $R \leq \gamma \lambda_x \max(1, \mu_x / \lambda_x) (1 - \sum_y \beta_y) + \mu_x$.

We now consider two cases, based on which of $\mu_x$ or $\lambda_x$ is larger.

\noindent
\textbf{Case 1: $\mu_x \leq \lambda_x$.}  Then $\max\{1, \mu_x / \lambda_x\} = 1$.  Substituting our bounds on $F(\vec{b})$ and $R$ into \eqref{eq:sum_of_rates} above, we have that the unconditional probability of event $B$ is at least
\begin{equation*}
\sum_{\vec{b}} \Pr[\vec{b}] \cdot \frac{\lambda_x \prod_{y \in X}(1-\gamma)^{b_y}}{\gamma \lambda_x (1 - \sum_y \beta_y) + \mu_x + \lambda_x \prod_{y \in X}(1-\gamma)^{b_y}}.
\end{equation*}
Since $\mu_x \leq \lambda_x$ by assumption, we can divide top and bottom by $\lambda_x$ to see that the probability is at least
\begin{equation}
\label{eq:sum_case1}
\min\left\{1, \frac{\lambda_x}{\mu_x}\right\} \cdot \sum_{\vec{b}} \Pr[\vec{b}] \cdot \frac{\prod_{y \in X}(1-\gamma)^{b_y}}{\gamma (1 - \sum_y \beta_y) + 1 + \prod_{y \in X}(1-\gamma)^{b_y}}
\end{equation}
where the extra term $\min\left\{1, \frac{\lambda_x}{\mu_x}\right\}$ is simply equal to $1$ in this case.

\noindent
\textbf{Case 2: $\mu_x > \lambda_x$.}  Then $\max\{1, \mu_x / \lambda_x\} = \mu_x / \lambda_x$.  Substituting our bounds on $F(\vec{b})$ and $R$ into \eqref{eq:sum_of_rates} above, we have that the unconditional probability of event $B$ is at least
\begin{equation*}
\sum_{\vec{b}} \Pr[\vec{b}] \cdot \frac{\lambda_x \prod_{y \in X}(1-\gamma)^{b_y}}{\gamma \lambda_x (\mu_x / \lambda_x) (1 - \sum_y \beta_y) + \mu_x + \lambda_x \prod_{y \in X}(1-\gamma)^{b_y}}.
\end{equation*}
Since $\mu_x > \lambda_x$ by assumption, this expression is at least we can divide top and bottom by $\mu_x$ to again see that the probability is at least 
\begin{equation*}
\sum_{\vec{b}} \Pr[\vec{b}] \cdot \frac{\lambda_x \prod_{y \in X}(1-\gamma)^{b_y}}{\gamma \mu_x (1 - \sum_y \beta_y) + \mu_x + \mu_x \prod_{y \in X}(1-\gamma)^{b_y}}.
\end{equation*}
Since $\lambda_x/\mu_x < 1$, we can pull out a factor of $\min\{1, \lambda_x/\mu_x\} = \lambda_x / \mu_x$ from each term to obtain \eqref{eq:sum_case1}.

This ends the case analysis.  In each case, the probability of event $B$ is at least \eqref{eq:sum_case1}.  We will now derive a lower bound on this probability.  For convenience we'll omit the leading coefficient $\min\left\{1, \frac{\lambda_x}{\mu_x}\right\}$ in the calculations below, and focus on the summation within \eqref{eq:sum_case1}.  We will bound this sum by considering only summands in which $0$, $1$, or $2$ elements of vector $\vec{b}$ is equal to $1$.  (This restriction is related to the fact that we will eventually choose $\gamma = 1/2$).  Using our expression for $\Pr[\vec{b}]$, and defining $\beta = \sum_y \beta_y$, we can rewrite these terms of \eqref{eq:sum_case1} as
\begin{align*}
& e^{-\beta}\frac{1}{\gamma(1-\beta) + 2} + \sum_y (1-e^{-\beta_y})e^{\beta-\beta_y}\frac{(1-\gamma)}{\gamma(1-\beta) + 2 - \gamma}\\
& + \sum_{y_1, y_2}(1-e^{-\beta_{y_1}})(1-e^{-\beta_{y_2}})e^{\beta-\beta_{y_1}-\beta_{y_2}}\frac{(1-\gamma)^2}{\gamma(1-\beta) + 1 + (1 - \gamma)^2}.
\end{align*}
This expression is convex in each $\beta_y$ and weakly decreasing in $\beta$, so its minimum occurs when $\sum_y \beta_y = \sum_y \alpha_{yx} \leq 1$ and all $\beta_y$ are equal.  Substituting into the expression above and using the fact that $n(e^{1/n} - 1) \geq 1$ for all $n \geq 1$, we have that the probability of event $B$ is at least
\[ \frac{1}{e} \cdot \frac{1}{2} + \frac{1}{e} \cdot \frac{1-\gamma}{2-\gamma} + \frac{1}{2e} \cdot \frac{(1-\gamma)^2}{1+(1-\gamma)^2} \]
which is greater than $\frac{1-\gamma}{2-\gamma}$ for all $\gamma \in [1/2, 1]$.  We conclude that \eqref{eq:sum_case1} is at least
\[ \min\left\{1, \frac{\lambda_x}{\mu_x}\right\} \cdot \frac{1-\gamma}{2-\gamma}\]
for all $\gamma \in [1/2, 1]$, as required.
\end{document}